\documentclass[journal,draftclsnofoot,onecolumn]{IEEEtran}
\usepackage{amsmath,amsfonts,amssymb,cite,color,enumerate,ifpdf,array,mathtools}
\usepackage{enumerate}
\usepackage{amsthm}
\usepackage{dsfont}
\usepackage[noend]{algpseudocode}
\usepackage{bbm}
\usepackage{soul}

\algnewcommand{\IIf}[1]{\State\algorithmicif\ #1\ \algorithmicthen}
\algnewcommand{\EndIIf}{\unskip\ \algorithmicend\ \algorithmicif}
\algrenewcommand\algorithmicindent{1em}%

\theoremstyle{definition}

\newtheorem{Prop}{Proposition}

\newif\ifdoublecolumn
\doublecolumnfalse

\ifdoublecolumn
\def\TableOneShortWidth{0.61in}
\def\TableOneLongWidth{2.56in}

\def\FigScale{0.6}
\def\FigScaleS{0.4}
\def\Extra{0.5ex}

\def\TableMargin{-16pt}

\else

\def\TableOneShortWidth{0.7in}
\def\TableOneLongWidth{5.4in}

\def\FigScale{0.65}
\def\FigScaleS{0.45}
\def\Extra{0.5ex}

\def\TableMargin{-15pt}

\fi

\begin{document}

\title{A Sensing Contribution-based Two-layer Game for Channel Selection and Spectrum Access in Cognitive Radio Ad-hoc Networks}

\author{Yuan~Lu,
        Alexandra~Duel-Hallen,~\IEEEmembership{Fellow,~IEEE}%
				\thanks{This research was partially supported by the NSF grant CNS-1018447. This paper was presented in part at the 49th
Conference on Information Sciences and Systems, Baltimore, Maryland, Mar. 2015.}
\thanks{Y. Lu is with Huawei Technologies, 400 Crossing Blvd., Bridgewater, NJ, 08807 USA (e-mail: ylu8@ncsu.edu).}
\thanks{A. Duel-Hallen is with the Department of Electrical and Computer Engineering, North Carolina State University, Raleigh,
NC, 27606 USA  (e-mail: sasha@ncsu.edu).}
}
\maketitle

\begin{abstract}
In cognitive radio (CR) networks, the secondary users (SUs) sense the spectrum licensed to the primary users (PUs) to identify and possibly transmit over temporarily unoccupied channels. Cooperative sensing was proposed to improve the sensing accuracy, but in heterogeneous scenarios SUs do not contribute equally to the cooperative sensing result because they experience different received PU signal quality at their sensors. In this paper, a two-layer coalitional game is developed for distributed sensing and access in multichannel CR ad hoc networks where the SUs' transmission opportunities are commensurate with their sensing contributions, thus  fostering cooperation and eliminating free-riders. Numerical results show that the proposed two-layer game is computationally efficient and outperforms previously investigated collaborative sensing and spectrum access approaches for heterogeneous multichannel CR networks in terms of energy efficiency, throughput, SU fairness, and complexity. Moreover, it is demonstrated that this game is robust to changes in the network topology and the number of SUs in low-mobility scenarios. Finally, we propose a new physical-layer approach to distributing the network-level miss-detection (MD) constraints fairly among the interfering SUs for guaranteed PU protection and demonstrate the performance advantages of the AND-rule combining of spectrum sensing results for heterogeneous SUs. 
\end{abstract}

\begin{IEEEkeywords}
Coalitional game, bargaining game, cooperative sensing and access, cognitive radio (CR).
\end{IEEEkeywords}

\begin{table}[t]
\caption{Main Notation}  
\label{Table:Notation} 
\centering 
\begin{tabular}{@{\extracolsep{-\Extra}}|>{\centering\arraybackslash}m{\TableOneShortWidth}|>{\arraybackslash}m{\TableOneLongWidth}|}
\hline\hline 
\textbf{Notation} & \multicolumn{1}{c|}{\textbf{Explanation}}\\
\hline\hline
$\mathcal{M}$, $\mathcal{N}$ &  Set of all SUs and set of all channels. \\
\hline
$C^n=(S,n)$ & {A top-layer coalition $C^n$ is a two-tuple, where $S\subseteq\mathcal{M}$ is a set of SUs and $n$ is the operating channel. } \\ 
\hline
$\mathcal{P}$ &  {A top-layer partition of $\mathcal{M}$ defines a set of disjoint top-layer coalitions $\mathcal{P}=\{C^1,C^2, \ldots ,C^N\}$, where $\cup_{n=1}^NC^n(1)=\mathcal{M}$.}\\
\hline
$x^{mC^n}$ & {Top-layer utility, given by the expected data rate of SU $m$ operating on channel $n$.}\\
\hline
$a^{mC^n}$ & {Bottom-layer game payoff, which measures the $m$th SU's priority for transmission over channel $n$.}\\
\hline
$R^{mn}$ & {Transmission rate of SU pair $m$ on channel $n$.}\\ 
\hline
$\gamma^{mn}$ & {SU-to-SU signal-to-noise ratio (SNR) of SU pair $m$ on channel $n$.}\\
\hline
{$B^n$, $\beta^n$} & {Bandwidth and availability probability of channel $n$.}\\
\hline
$(S,U^n)$ & {Bottom-layer game among a set of SUs $S$ on channel $n$ with the value function $U^n$.} \\
\hline
$\eta$ & {A bottom-layer coalition $\eta$ is a set of cooperative SUs sensing the same channel}. \\
\hline
$\rho^n$ & {Bottom-layer partition on channel $n$ defines a set of disjoint bottom-layer coalitions $\rho^n=\{\eta_1,\eta_2,\ldots,\eta_{|\rho^n|}\}$, where $\cup_{i=1}^{|\rho^n|} \eta_i=C^n(1)$.} \\
\hline
$\rho^n\backslash\{\eta\}$ & Relative complement of $\{\eta\}$ in $\rho^n$, i.e., the set of all bottom-layer coalitions in $\rho^n$ except $\eta$. \\
\hline
$U^n(\eta;\rho^n)$ & Value of bottom-layer coalition $\eta$, as given by $\eta$'s overall successful transmission probability on channel $n$ under the bottom-layer partition $\rho^n$ where $\eta\in \rho^n$. \\
\hline
\ifdoublecolumn
$P_\mathrm{MD}^n(\eta)$,\vspace{0.5ex}\newline $P_\mathrm{FA}^n(\eta)$\vspace{0.5ex}  & {Cooperative miss detection (MD) and false alarm (FA) probabilities of a bottom-layer coalition $\eta$ regarding the PU presence.} \\
\else
$P_\mathrm{MD}^n(\eta)$, $P_\mathrm{FA}^n(\eta)$& {Cooperative miss-detection (MD) and false alarm (FA) probabilities of a bottom-layer coalition $\eta$ regarding the PU presence.} \\
\fi
\hline 
$P_\mathrm{MD}^\mathrm{Ch}$ & {Integrated MD probability constraint on each channel.} \\
\hline
\ifdoublecolumn
$P_\mathrm{MD}^n(m)$,\vspace{0.5ex}\newline $P_\mathrm{FA}^n(m)$\vspace{0.5ex} & {Individual MD constraint and FA probability of SU $m$ on channel $n$.} \\
\else
$P_\mathrm{MD}^n(m)$, $P_\mathrm{FA}^n(m)$ & {Individual MD constraint and FA probability of SU $m$ on channel $n$.} \\
\fi
\hline 
{$\lambda^{mn}$} & { PU-to-SU SNR per sample at SU $m$ on channel $n$.} \\
\hline  
{$\nu$} & {Number of collected samples for spectrum sensing.} \\
\hline

\end{tabular} 
\vspace*{\TableMargin}
\end{table}

\section{Introduction}
Cooperative sensing exploits spatial diversity to improve sensing accuracy in cognitive radio (CR) systems\cite{CoopSenseSurvey}. While most investigations assume a fixed number of fully cooperative secondary users (SUs) with identical sensing capabilities monitoring a single channel, in practice, there are many possible channels for sensing and transmission, and the sensing accuracy varies over the spectrum and among the SUs. In ad-hoc networks, which do not utilize a central controller, and under the hardware constraints, SUs need to choose both the channels to sense and their collaborators\footnote{We consider only cooperation among the SUs. The PU-SU cooperation \cite{MatchingTheory} is out of the scope of this paper.} for spectrum sensing in a distributed manner. Moreover, since simultaneous SU transmission attempts cause mutual interference and/or collisions, resulting in wasted sensing effort and transmission opportunities, an SUs' agreement on sharing detected spectrum holes is also desirable.

Game theory has been utilized recently to model and analyze SU interactions in cooperative sensing \cite{PartitionFormSenseGame,RLiuEvoGame,ZHanCoalSenseGame,MultiChCoalSenseGame,FairPayoff,AuctionBased,Overlapp,HedonicSenseGame} and opportunistic access \cite{HedonicSenseGame,NetwEcon,JWangChSelSurvey}, but, to the best of our knowledge, only the game in \cite{HedonicSenseGame} takes into account transmission (spectrum access) opportunities when making channel sensing and collaboration decisions in a multichannel CR network. However, the game in \cite{HedonicSenseGame} is not suitable for a heterogeneous environment where the contributions of different SUs within a coalition can vary significantly, although such heterogeneity is typical in wireless networks. For example, suppose several SUs sense the same channel and attempt to transmit over it when it is sensed idle using a distributed medium access control (MAC) scheme. If some of these SUs are located in the proximity of the primary user (PU), they are likely to obtain more accurate sensing results, which lead to better transmission chances than for more distant SUs. In such heterogeneous scenarios, the SUs would benefit if the distributed MAC method was replaced by the following agreement on coordinated medium access: SUs with good sensing accuracy would share their sensing results with other SUs in exchange for a higher payoff (e.g., a higher chance or a larger time share of transmission) when a spectrum opportunity is detected. This observation motivates the proposed two-layer game where coalition formation across and within the channels is based on such an agreement on payoff allocation. In contrast, the game of \cite{HedonicSenseGame} forces all SUs sensing the same channel to cooperate altruistically and awards them equally. As a result, selfish SUs might choose to become free-riders \cite{RLiuEvoGame} by taking advantage of other SUs' accurate sensing results without contributing appropriately. This approach undermines fairness among individual SUs and, thus, might discourage SUs from contributing to cooperative sensing, causing the overall network throughput to decline. While a single-layer coalitional game could be devised to meet the proposed fairness and performance objectives, we demonstrate that the two-layer game structure is more attractive in distributed sensing and access due to reduced complexity, delay, and overhead requirements.

Moreover, to facilitate efficient spectrum sensing, we develop constraints on miss-detection (MD) rates of coalition members and provide novel insights into the performance of fusion rules of sensing results for heterogeneous networks under the constant detection rate (CDR) \cite{YCLiangWCNC} constraints.

The \textit{contributions} of this paper are:
\begin{itemize}
	\item Development and analysis of a two-layer coalitional game that includes:
	\begin{enumerate}[(i)]
		\item An efficient, stable, and distributed coalition formation algorithm that assigns SUs to channels within the CR spectrum.
		\item A contribution-based payoff allocation scheme to promote individual incentives for cooperation.
	\end{enumerate}
  \item Improved spectrum sensing approaches:
	\begin{enumerate}[(i)]
		\item A fair distribution method of the integrated network-level primary collision probability constraint among the coalitions and member SUs sharing the same channel.
		\item Demonstration of performance advantages of the AND-rule combining for heterogeneous sensing environments under the CDR constraints. 
	\end{enumerate}
\end{itemize}
While we have first proposed the two-layer game in \cite{YLuCISS}, that paper did not contain several key proofs, practical validation, and complexity analysis of the coalitional game and did not address sensing data fusion rules for heterogeneous environments.

The rest of this paper is organized as follows. Table~\ref{Table:Notation} summarizes significant notation. In section~\ref{Sec:Model}, we introduce the system model and formulate the proposed two-layer game, which is then analyzed in Section~\ref{Sec:Game} using coalitional game theory. Simulation results and comparison with \cite{HedonicSenseGame} are presented in Section~\ref{Sec:Sim}, and conclusions are drawn in Section~\ref{Sec:Conclusion}.

\section{System Model and Two-Layer Game Formulation}\label{Sec:Model}
We consider an overlay slotted\footnote{\label{Note:continuousPUtraffic}Our results can be extended to continuous PU traffic by adjusting the primary collision constraints\cite{JWangChSelSurvey}.} CR ad hoc network with multiple SU pairs and multiple channels under i.i.d. Bernoulli PU traffic. An SU can sense and access only one channel at each time slot due to the hardware constraints. The bidirectional interaction and information exchange within the proposed two-layer game framework is illustrated in Fig.~\ref{Fig:TwoLayerGame}: The top-layer game partitions the SUs into $N$ disjoint groups for sensing different channels, based on the expected payoff computed from the bottom layer. At the bottom-layer, a total number of N coalitional games are played in parallel, and the goal is to determine the collaborative structure among the SUs sensing a given channel as well as the payoff allocation. The two-layer game is introduced below and analyzed in Sections~\ref{Sec:Game} and~\ref{Sec:Sim} under several simplified and typically employed assumptions, e.g., negligible control overhead and time-invariant additive white Gaussian noise (AWGN) channels. We discuss the extension of the proposed game to settings where these assumptions are relaxed in Section~\ref{SubSec:Complexity}.
\subsection{The Top-Layer Game}\label{SubSec:TopDef}
\begin{figure}[!t]
    \centering
    \includegraphics[scale=\FigScaleS]{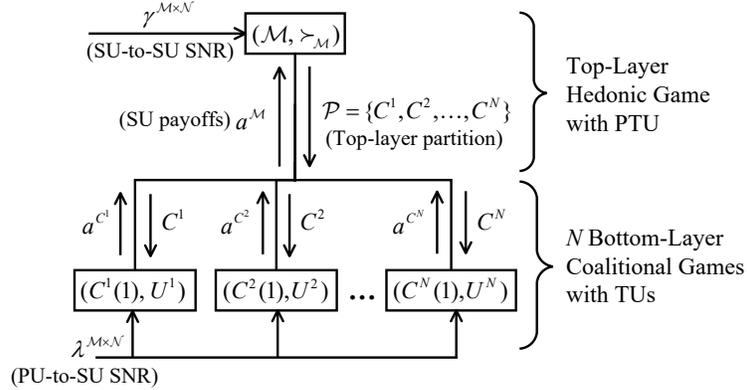}
    \caption{Two-layer coalitional game.}
    \label{Fig:TwoLayerGame}
\end{figure}

The proposed top-layer game is a \emph{hedonic game}, which is characterized by a set of players and a \emph{preference relation}\cite{Hedonic}. The set of players consists of all SUs $\mathcal{M}=\{1,2,\ldots,M\}$ over the set of all channels $\mathcal{N}=\{1,2,\ldots,N\}$, where each SU is associated with some coalition $C^n=(S,n)$\footnote{Note that a top-layer coalition is a two-tuple specifying both the set of SUs $S$ and the channel index $n$ because the same set of SUs $S$ can achieve different throughputs on different channels.}. The SUs in $S=C^n(1)$ sense and transmit over the $n$th channel, i.e., the sensing decision of each SU $m \in S$ is
\begin{equation}\label{Eq:SenseDec}
n^m_*=n\Leftrightarrow m\in C^n(1).
\end{equation}
The utility of the SU $m\in S$ of the top-layer coalition $C^n=(S,n)$ is given by
\begin{equation}\label{Eq:PTU}
x^{mC^n}=a^{mC^n}R^{mn}
\end{equation}
where $a^{mC^n}$ is the allocated individual payoff of SU $m$ provided by the bottom-layer game,  which is a transferable utility (TU) \cite{CoalGameSurvey}, and $R^{mn}$ is the data rate of SU $m$ on channel $n$ given by $R^{mn}=B^{n}\log_2(1+\gamma^{mn})$, a nontransferable utility (NTU) \cite{CoalGameSurvey}, which can vary greatly across different SU pairs due to their spatial separation. Thus, we refer to the utility (\ref{Eq:PTU}) as \emph{partially transferable utility (PTU).}

The utility (\ref{Eq:PTU}) measures the expected data rate of the $m$th SU over channel $n$. Moreover, following the hedonic game formulation\cite{Hedonic}, we define a preference relation that guides an SU when choosing a top-layer coalition. An SU $m$ prefers to move from channel $n$ to $\tilde{n}$ if the following preference relation is satisfied:
\begin{equation}\label{Eq:Preference}
\begin{split}
\tilde{C}^{\tilde{n}}\succ_m C^n\!\Leftrightarrow\! \left\{\begin{matrix}
 x^{m \tilde{C}^{\tilde{n}}}>x^{m C^n} \\ 
  \smashoperator[r]{\sum_{ i{\in}  \tilde{C}^n(1)}}a^{i{ \tilde{C}^n}}{+}\smashoperator{\sum_{i\in \tilde{C}^{\tilde{n}}(1)}}a^{i \tilde{C}^{\tilde{n}}}{ >}\smashoperator{\sum_{i\in C^n(1)}}a^{iC^n}{+}\smashoperator{\sum_{i\in C^{\tilde{n}}(1)}}a^{iC^{\tilde{n}}}
\end{matrix}\right.\raisetag{2.8\baselineskip}
\end{split}
\end{equation}
where $\tilde{C}^n{=}(C^n(1)\backslash \{m\},n)$ and $\tilde{C}^{\tilde{n}}$ are the top-layer coalitions that form on channels $n$ and $\tilde{n}$, respectively, if SU $m$ makes the move, and $C^n$ and $C^{\tilde{n}}{=}(\tilde{C}^{\tilde{n}}(1)\backslash\{m\},\tilde{n})$ are the existing top-layer coalitions on these channels. Thus, the top-layer coalition $\tilde{C}^{\tilde{n}}$ is preferable to $C^n$ for SU $m$ if (i) its expected data rate (\ref{Eq:PTU}) improves and (ii) the combined (or equivalently, the average) payoffs of all SUs on channels $n$ and $\tilde{n}$ improve. This \emph{social} utility improvement requirement (ii) introduces a loose constraint on selfish and short-sighted individual movements, with the aim to facilitate fast convergence of the partition formation process. In this sense, it is similar to, yet more effective than, the use of ``history sets'' in \cite{HedonicSenseGame} as will be demonstrated in Section~\ref{Sec:Sim}. For example, in Fig.~\ref{Fig:Example}, SU $m=4$ prefers to switch from channel $1$ to channel $2$ if this movement improves not only its individual throughput, but also the combined transmission opportunities (or equivalently, the average payoff) of all SUs currently residing on these channels, i.e., SUs $1$--$4$ and $6$.

Given $(\mathcal{M},\mathcal{N},a^{\mathcal{M}}, \gamma^{\mathcal{M}\times \mathcal{N}})$, the objective of the top-layer game is to form a \emph{Nash-stable} \cite{Hedonic} partition $\mathcal{P}$ (see Fig.~\ref{Fig:TwoLayerGame} and Table~\ref{Table:Notation}), where each player cannot unilaterally improve its utility $x^{m C^n}$ given Nash utilities of other players.

\begin{figure}[!t]
    \centering
    \includegraphics[scale=0.43]{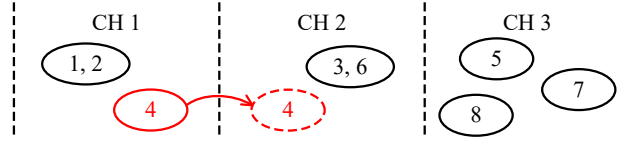}
    \caption{An example of the two-layer coalition structure and SU movements  with $\mathcal{M}=\{1,\ldots,8\}$ and $\mathcal{N}=\{1,2,3\}$. The top-layer partition is ${\mathcal P} = \{ (\{ 1,2,4\} ,1),(\{ 3,6\} ,2),(\{ 5,7,8\} ,3)\}$ and the bottom-layer partitions are ${\rho ^1} = \{ \{ 1,2\} ,\{ 4\} \}$, ${\rho ^2} = \{ \{ 3,6\} \} $ and ${\rho ^3} = \{ \{ 5\} ,\{ 7\} ,\{ 8\} \}$. SU $m=4$ prefers to move from channel $n=1$ to $\tilde{n}=2$ if (\ref{Eq:Preference}) is satisfied with $C^n=(\{1,2,4\},1)$, $C^{\tilde n}=(\{3,6\},2)$,  $\tilde{C}^n=(\{1,2\},1)$ and $\tilde{C}^{\tilde{n}}=(\{3,6,4\},2)$.}
    \label{Fig:Example}
\end{figure}

\subsection{The Bottom-Layer Game}\label{SubSec:BottomDef}
An output partition $\mathcal{P}$ of the top-layer game determines the set of SUs that can sense and transmit over each channel. The SUs sensing the same channel $n$ can further form disjoint bottom-layer coalitions, resulting in a bottom-layer partition $\rho^n$, as illustrated in Fig.~\ref{Fig:Example}. Within each bottom-layer coalition, the SUs exchange and combine their sensing results to improve the overall successful transmission probability of this coalition. The detected spectrum opportunities are then shared among the bottom-layer coalition members in a coordinated manner using a payoff allocation rule, which determines an SU's share of the slot for transmission and provides collision avoidance (cf. Section~\ref{SubSec:BottomSol}). Formally, we define a \emph{coalitional game} $(S,U^n)$ \cite{CoalGameSurvey} for each channel $n$ with
\begin{itemize}
	\item  \emph{a set of players}: the set of SUs $S\subset \mathcal{M}$ sensing channel $n$, i.e., $S=C^n(1)$ obtained from the top-layer game;
  \item  \emph{a value function}: a function $U^n$ that maps any subset of $S$ to a real value.
\end{itemize}

\emph{Within }each bottom-layer coalition, SUs coordinate channel access through negotiation, instead of relying on a distributed MAC scheme. In contrast, MAC is needed \emph{across} multiple bottom-layer coalitions since several such coalitions could simultaneously detect a spectrum opportunity correctly and, therefore, would compete for access. As in \cite{QZhaoTDFS}, we consider the following two MAC options:

\begin{enumerate}
  \item \emph{0/X-model}: The competing SUs do not employ a distributed MAC scheme. Therefore, their transmission collide and all competing SUs fail to transmit successfully.
  \item \emph{1/X-model}: All competing SUs resort to an ideal MAC scheme to obtain equal transmission chances\footnote{We assume negligible control overhead in this random access MAC model.}. After gaining the right to access, the winning SU shares its transmission opportunity with its bottom-layer coalition's members as required by their binding agreement on payoff allocation.
\end{enumerate}
The chosen MAC option affects the players' \emph{noncooperation cost}, which measures the impact of SU collisions on their transmission opportunities. Due to space limitations, we consider only the 0/X- and 1/X-models, which have a high and a low noncooperation cost, respectively. In practice, fair distributed MAC can be realized through random backoff and control message exchange at the cost of some control overhead and/or missed transmission opportunities.

Next, we consider computation of the value function $U^n$ of the coalition $\eta\subseteq S$ on channel $n$, defined as the overall successful transmission probability of $\eta$ on this channel. Given a partition $\rho^n$ on channel $n$, label all other bottom-layer coalitions in $\rho^n \backslash \{ \eta \}$ on channel $n$ as
\begin{equation}\label{Eq:ParitionExcept}
\begin{split}
\rho^n \backslash \{ \eta \}  {=} \{ {\xi _1},{\xi _2}, \ldots ,{\xi _{|\rho^n |- 1}}\}
\end{split}
\end{equation}
and define a binary-valued random vector of length $|\rho^n |{-} 1$ 
\begin{equation}\label{Eq:X}
\begin{split}
\mathbf{X}_{\rho^n\backslash\{\eta\}}=(X_{\xi_1},X_{\xi_2},\ldots,X_{\xi_{|\rho^n|-1}})\in\{0,1\}^{|\rho^n |- 1}
\end{split}
\end{equation}
where $X_{\xi_i}\!\in\!\{0,1\}$ is an indicator variable for the event that the coalition $\xi_i$ experiences a false alarm (FA), i.e.,
\begin{equation}\label{Eq:Indicator}
\Pr(X_{\xi_i}=x_i)=\Bigg\{\begin{matrix}
P_{\mathrm{FA}}^n({\xi _i}), & \textrm{if }x_i=1 \\ 
1-P_{\mathrm{FA}}^n({\xi _i}), & \textrm{if }x_i=0.
\end{matrix}
\end{equation}
The coalition values (successful transmission probabilities) for the two MAC models can be expressed as:
\ifdoublecolumn
\begin{align}\label{Eq:CoalValue0X}
U_\mathrm{0/X}^n(\eta;\rho^n)&=\Pr[\text{$\eta$ is the only coalition in $\rho^n$ that}\notag\\
&\text{successfully detects a spectrum opportunity}]\notag\\
&=\beta^n(1-P_\mathrm{FA}^n(\eta))\cdot{\prod\nolimits _{i=1}^{|\rho^n|-1}}P_\mathrm{FA}^n(\xi_i)
\end{align}
\else
\begin{align}\label{Eq:CoalValue0X}
U_\mathrm{0/X}^n(\eta;\rho^n)&=\Pr[\text{$\eta$ is the only coalition in $\rho^n$ that successfully detects a spectrum opportunity}]\notag\\
&=\beta^n(1-P_\mathrm{FA}^n(\eta))\cdot{\prod\nolimits _{i=1}^{|\rho^n|-1}}P_\mathrm{FA}^n(\xi_i)
\end{align}
\fi
and
\ifdoublecolumn
\begin{align}\label{Eq:CoalValue1X}
&U_\mathrm{1/X}^n(\eta ;\rho^n )=\Pr[\text{$\eta$ successfully detects a spectrum}\notag\\
&\text{ opportunity and wins the MAC contention within $\rho^n$}]\notag\\
&=\beta ^n(1-P_\mathrm{FA}^n(\eta ))\cdot\mathbb{E}\bigg[\frac{|\eta|}{|\eta|+J_{\rho^n\backslash\{\eta\}}\left(\mathbf{X}_{\rho^n\backslash\{\eta\}}\right)}\bigg]\\
&=\!\beta ^n(1\!-\!P_\mathrm{FA}^n(\eta ))\cdot\smashoperator[l]{\sum_{\mathbf{x}{\in}\{0,1\}^{|\rho^n|-1}}}\bigg\{\frac{|\eta|\prod \nolimits_{i=1}^{|\rho^n|-1}\Pr(X_{\xi_i}{=}x_i)}{|\eta|+J_{\rho^n\backslash\{\eta\}}(\mathbf{x})}\bigg\}\notag
\end{align}
\else
\begin{align}\label{Eq:CoalValue1X}
U_\mathrm{1/X}^n(\eta ;\rho^n )& =\Pr[\text{$\eta$ successfully detects a spectrum opportunity and wins the MAC contention within $\rho^n$}]\notag\\
&=\beta ^n(1-P_\mathrm{FA}^n(\eta ))\cdot\mathbb{E}\bigg[\frac{|\eta|}{|\eta|+J_{\rho^n\backslash\{\eta\}}\left(\mathbf{X}_{\rho^n\backslash\{\eta\}}\right)}\bigg]\notag\\
&=\!\beta ^n(1\!-\!P_\mathrm{FA}^n(\eta ))\cdot\smashoperator[l]{\sum_{\mathbf{x}{\in}\{0,1\}^{|\rho^n|-1}}}\bigg\{\frac{|\eta|\prod \nolimits_{i=1}^{|\rho^n|-1}\Pr(X_{\xi_i}{=}x_i)}{|\eta|+J_{\rho^n\backslash\{\eta\}}(\mathbf{x})}\bigg\}
\end{align}
\fi
where $\beta^n$ and $P_\mathrm{FA}^n(\eta)$ are defined in Table~\ref{Table:Notation}, and the number of competing SUs for cognitive access is given by
\begin{equation}\label{Eq:J}
\begin{split}
J_{\rho^n\backslash\{\eta\}}(\mathbf{x})={\sum\nolimits_{i = 1}^{|\rho^n | - 1}} ( {1 - x_i)|{\xi _i}|}. 
\end{split}
\end{equation}

Since the bottom-layer coalition value $U^n(\eta;\rho^n)$ depends not only on the actions of the coalition members, but also on the actions of other SUs on channel $n$, this coalitional game is in partition form \cite{CoalGameSurvey}. We assume the overall values (\ref{Eq:CoalValue0X}) and (\ref{Eq:CoalValue1X}) of a bottom-layer coalition are dividable and can be transferred among coalition members according to the allocated payoffs $a^{mC^n}$ [cf. (\ref{Eq:PTU})], so that the bottom-layer game has TU. To illustrate, consider the following example. Suppose the overall value of a two-SU bottom-layer coalition $\eta=\{1,2\}$ on channel $n$ is $U^n(\eta)=0.8$, and both SUs agree on a payoff allocation of $a^{1C^n}=0.2$ and $a^{2C^n}=0.6$. Under the 0/X-model, the coalition $\eta$ acquires a transmission opportunity when $\eta$ discovers an idle time slot. On the other hand, under the 1/X-model, the coalition $\eta$ acquires a transmission opportunity when $\eta$ discovers an idle time slot, possibly competes with other coalitions, and finally wins a right to transmit. The acquired transmission opportunities are then shared between the two SUs according to their payoff allocation agreement, e.g., in a probabilistic manner, with the conditional probabilities that SU $1$ and SU $2$ are allowed to transmit for the entire time slot given by $a^{1C^n}{/}(a^{1C^n}{+}a^{2C^n}){=}1/4$ and $a^{2C^n}{/}(a^{1C^n}{+}a^{2C^n}){=}3/4$, respectively. Other schedule-based multi-access methods for sharing the transmission opportunity can be employed. For example, SU $1$ and SU $2$ can transmit in a time-division multiple access (TDMA) manner, for $1/4$ and $3/4$ of the time, respectively\footnote{SU transmissions might be unsuccessful either due to SU collisions (under the 0/X-model) or MD of PU traffic (under both MAC models).}. 

\subsection{Cooperative Sensing}\label{subsec:CoopSenseDef}
Next, we present the cooperative sensing scheme at the physical layer for the set of SUs $S$ sensing channel $n$ assuming a bottom-layer partition $\rho^n$ of $S$. The PU transmission is interrupted if one or more bottom-layer coalitions in $\rho^n$ fails to detect its presence, so the integrated MD probability on channel $n$ is given by
\begin{equation}\label{Eq:ChMD}
\begin{split}
P^n_\mathrm{MD}(\rho^n)=1-\prod_{\eta\in\rho^n}\left ( 1-P^n_\mathrm{MD}(\eta) \right ).
\end{split}
\end{equation}
We impose the following constraints:

\noindent \emph{Network-level} constraint \textbf{(C.1)}: $P_\mathrm{MD}^n(\rho^n)=P_\mathrm{MD}^\mathrm{Ch}$ (cf. Table~\ref{Table:Notation})

\noindent \emph{Coalition-level} constraint \textbf{(C.2)}: All equal-sized bottom-layer coalitions should maintain the same MD rate.

\noindent \emph{Node-level} constraint \textbf{(C.3)}: All SUs within a bottom-layer coalition must satisfy the same MD constraint\cite{RLiuEvoGame,YCLiangWCNC}.

The constraints \textbf{(C.1)} and \textbf{(C.2)} are due to the requirement that the set of SUs $S$ sensing channel $n$ should always maintain the integrated MD rate $P_\mathrm{MD}^\mathrm{Ch}$, regardless of how the coalition formation process evolves, which can be satisfied by
\begin{equation}\label{Eq:EtaMD}
P_\mathrm{MD}^n({\eta})=1-(1-P_\mathrm{MD}^\mathrm{Ch})^{|\eta|/{|S|}}
\end{equation}
for all $\eta\in\rho^n$. The constraint \textbf{(C.3)} is motivated by fairness since all SUs should take equal responsibility for PU protection.

Next, we evaluate performance of the \emph{AND-} and the \emph{OR-combining} rules for the proposed constrained spectrum sensing approach. The coalition-level FA probabilities of these fusion rules can be obtained from \cite[eq.~(10--17)]{YCLiangWCNC} assuming AWGN channels\footnote{In practice, these channels may be subject to fading, and the ergodic sensing accuracy probabilities can be obtained by averaging over the fading distribution \cite{CoopSense,MultiChCoalSenseGame,ZHanCoalSenseGame}.}:
\begin{align}
P_\mathrm{FA,AND}^n(\eta ) &=\smashoperator{\prod\limits_{m \in \eta }}P_\mathrm{FA,AND}^n(m)\label{Eq:CoalFAResultAND}\\
&= \smashoperator{\prod\limits_{m \in \eta }}Q\left(\sqrt{2\lambda^{mn}+1}\phi+\sqrt\nu\lambda^{mn}\right)\notag\\
P_\mathrm{FA,OR}^n(\eta ) &=1 - \smashoperator{\prod\limits_{m \in \eta }}(1-P_\mathrm{FA,OR}^n(m))\label{Eq:CoalFAResultOR}\\
&= 1 - \smashoperator{\prod\limits_{m \in \eta }}{\left(1-Q(\sqrt{2\lambda^{mn}+1}\tilde{\phi}+\sqrt\nu\lambda^{mn})\right)}\notag
\end{align}
respectively, with [cf. (\ref{Eq:EtaMD}) and \textbf{(C.3)}]
\begin{align}
\phi  &\triangleq Q^{-1}\big(1-P_\mathrm{MD,AND}^n(m)\big)=Q^{-1}\big( {( 1 - P_\mathrm{MD}^n(\eta ) )}^\frac{1}{|\eta |} \big)\notag\\
&=Q^{-1}\big((1-P_\mathrm{MD}^\mathrm{Ch})^\frac{1}{|S |}\big)\label{Eq:BetaDef}\\
\tilde \phi & \triangleq Q^{-1}\big(1-P_\mathrm{MD,OR}^n(m)\big)= Q^{- 1}\big( 1 - ( {P_{\mathrm{MD}}^n(\eta )} )^\frac{1}{|\eta|}\big)\notag\\
&= Q^{- 1}\Big( 1 - \big( 1-(1-P_\mathrm{MD}^\mathrm{Ch})^{|\eta|/{|S|}} \big)^\frac{1}{|\eta|}\Big)\label{Eq:TildeBetaDef}
\end{align}
where $m\in \eta\in \rho^n$, and $Q(\cdot)$ is the Q-function \cite[eq. (B.20)]{GoldsmithWireless}. First, from (\ref{Eq:BetaDef}) and (\ref{Eq:TildeBetaDef}), the individual MD constraint $P_\mathrm{MD}^n(m)$ depends on the size of the bottom layer coalition to which an SU belongs for the OR-rule, but not for the AND-rule. Thus, when using the AND-rule, SUs do not have to update the MD constraints when computing their values within different bottom-layer coalitions as long as the SU population $|S|$ on channel $n$ does not change. Second, for both rules, as $|S|$ increases, each coalition $\eta$ reduces its MD rate $P_\mathrm{MD}^n(\eta)$ in (\ref{Eq:EtaMD}), leading to increased FA probability $P^n_\mathrm{FA}(\eta)$ in (\ref{Eq:CoalFAResultAND},\ref{Eq:CoalFAResultOR}) and decreased coalition values in (\ref{Eq:CoalValue0X},\ref{Eq:CoalValue1X}). Thus, large values of $|S|$ are penalized, balancing SU competition on all channels. 

Finally, our extensive simulation results using (\ref{Eq:CoalFAResultAND}--\ref{Eq:TildeBetaDef}) show that the AND-rule is more suitable when cooperative SUs have heterogeneous sensing capabilities (i.e., different PU-to-SU SNRs) while the OR-rule provides better performance in homogeneous scenarios. Intuitively, if there is at least one SU $m\in\eta$ with favorable PU-to-SU SNR $\lambda^{mn}$, and thus with small FA probability $P^n_\mathrm{FA,AND}(m)$, the resulting $P^n_\mathrm{FA,AND}(\eta)$ is very small since $P^n_\mathrm{FA,AND}(\eta)\leq \min_{m\in \eta}P^n_\mathrm{FA,AND}(m)$ in (\ref{Eq:CoalFAResultAND}). In contrast, for the OR-rule, if only one member SU has very poor sensing capacity, the entire bottom-layer coalition $\eta$ suffers since $P^n_\mathrm{FA,OR}(\eta)\geq \max_{m\in \eta}P^n_\mathrm{FA,OR}(m)$ in (\ref{Eq:CoalFAResultOR}). For example, in Fig.~\ref{Fig:Hetero} we compare the integrated FA probabilities (\ref{Eq:CoalFAResultAND}) and (\ref{Eq:CoalFAResultOR}) for a coalition $\eta=\{1,2\}$ on channel $n$ when the average PU-to-SU SNR of the two member SUs is fixed to $\bar\lambda$. We observe that the OR-rule and the AND-rule achieve their best performance for homogeneous and heterogeneous sensing capacities (given by PU signal strengths at the sensor) of the two SUs, respectively.  The general case is explored below\cite{YLuThesis}.

\begin{figure}[!t]
    \centering
    \includegraphics[scale=\FigScale]{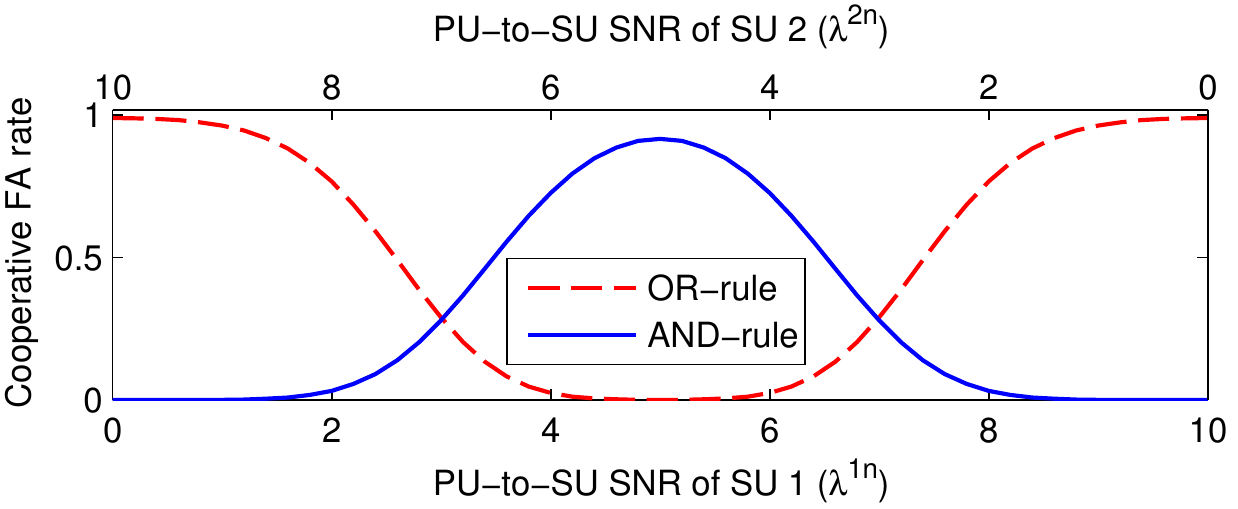}
    \caption{Performance comparison of the $P_{\mathrm{FA}}^n(\eta )$ for the AND- and OR-rule under the proposed MD constraints; $\eta=\{1,2\}$; $P_{{\mathrm{MD}}}^n(\eta )=10^{-4}$; $\nu=5$; $\bar{\lambda}=0.5(\lambda^{1n}+\lambda^{2n})=5$ ($7$\,dB).}
    \label{Fig:Hetero}
\end{figure}

\begin{Prop} \label{Prop:Convex} Consider a coalition $\eta$ on channel $n$ and a fixed bottom-layer-coalition-level MD probability constraint $P_{{\mathrm{MD}}}^n(\eta )$\footnote{Due to space limitations, we omit the proof, which shows the desired quasiconcavity and quasiconvexity by establishing the logconcavity of $P_\textnormal{FA,AND}^n(\eta )$ and $1-P_\textnormal{FA,OR}^n(\eta )$, respectively\cite{YLuThesis}.}. The FA probability of $\eta$ for the AND-rule (\ref{Eq:CoalFAResultAND}) [OR-rule (\ref{Eq:CoalFAResultOR})] is a quasiconcave (quasiconvex) function \cite[Section 3.4]{ConvexOpt}, which achieves its global maximum (global minimum) when $\lambda^{mn}=\bar{\lambda}$, $\forall m\in\eta$, subject to an average PU-to-SU SNR constraint
\begin{equation}\label{Eq:SNRConstraint}
\frac{1}{{|\eta |}}\sum\limits_{m \in \eta } {{\lambda ^{mn}}}  = \bar \lambda  
\end{equation}
if $\forall m\in\eta$,
\begin{equation}\label{Eq:ConvexCondition1}
P_{{\mathrm{MD}}}^n(\eta )< {(0.5)^{|\eta |}} 
\end{equation}
\begin{equation}\label{Eq:ConvexCondition2}
\tilde{\phi}>-\frac{\sqrt{\nu}(2\lambda^{mn}+1)^\frac{3}{2}}{3\lambda^{mn}+2}.
\end{equation}
\end{Prop}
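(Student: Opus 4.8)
The plan is to follow the route indicated in the footnote and establish the \emph{stronger} property that $P_{\mathrm{FA,AND}}^{n}(\eta)$ and $1-P_{\mathrm{FA,OR}}^{n}(\eta)$ are \emph{log-concave} in the SNR vector $(\lambda^{mn})_{m\in\eta}$ on the feasible set $\{\lambda^{mn}\ge 0\}$. A log-concave function is quasiconcave \cite{ConvexOpt}, which immediately gives the AND-rule claim; and since $P_{\mathrm{FA,OR}}^{n}(\eta)=1-\big(1-P_{\mathrm{FA,OR}}^{n}(\eta)\big)$, every sublevel set $\{P_{\mathrm{FA,OR}}^{n}(\eta)\le c\}$ equals a superlevel set $\{1-P_{\mathrm{FA,OR}}^{n}(\eta)\ge 1-c\}$, which is convex, so $P_{\mathrm{FA,OR}}^{n}(\eta)$ is quasiconvex. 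Using $1-Q(x)=Q(-x)$ on \eqref{Eq:CoalFAResultOR}, I would first write
\[ \log P_{\mathrm{FA,AND}}^{n}(\eta)=\sum_{m\in\eta}\log Q\big(g(\lambda^{mn})\big),\qquad \log\big(1-P_{\mathrm{FA,OR}}^{n}(\eta)\big)=\sum_{m\in\eta}\log Q\big(\psi(\lambda^{mn})\big), \]
with $g(\lambda)=\sqrt{2\lambda+1}\,\phi+\sqrt{\nu}\,\lambda$ and $\psi(\lambda)=-\sqrt{2\lambda+1}\,\tilde\phi-\sqrt{\nu}\,\lambda$. Both right-hand sides are \emph{separable} sums that use the \emph{same symmetric} summand in each coordinate, so each log-concavity claim reduces to concavity of a single-variable function.

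The AND-rule is the routine half. The Gaussian tail $Q$ is log-concave (the survival function of a log-concave density), so $\log Q$ is concave and nonincreasing. Condition \eqref{Eq:ConvexCondition1} forces $(1-P_{\mathrm{MD}}^{n}(\eta))^{1/|\eta|}>\tfrac12$, hence $\phi=Q^{-1}\big((1-P_{\mathrm{MD}}^{n}(\eta))^{1/|\eta|}\big)<0$ by \eqref{Eq:BetaDef}; consequently $g''(\lambda)=-\phi\,(2\lambda+1)^{-3/2}>0$, i.e.\ $g$ is convex. A concave nonincreasing function composed with a convex function is concave, so each summand $\log Q(g(\lambda^{mn}))$, and therefore $\log P_{\mathrm{FA,AND}}^{n}(\eta)$, is concave.

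The OR-rule is the crux, because $\psi''(\lambda)=\tilde\phi\,(2\lambda+1)^{-3/2}$ is negative whenever $\tilde\phi<0$, so $\psi$ need not be convex and the composition rule fails. Here I would differentiate directly. Writing $r=q/Q$ for the Gaussian inverse-Mills (hazard) ratio, with $q$ the standard normal density, one has $r'(x)=r(x)\,(r(x)-x)$ and
\[ \frac{d^{2}}{d\lambda^{2}}\log Q(\psi)=-r(\psi)\big[(r(\psi)-\psi)(\psi')^{2}+\psi''\big]. \]
Since $r(\psi)>0$, concavity is equivalent to the bracket being nonnegative; and because $r(x)-x>0$ for every $x$, the term $(r(\psi)-\psi)(\psi')^{2}$ is nonnegative, so only a negative $\psi''$ is dangerous. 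Dividing \eqref{Eq:ConvexCondition2} by $(2\lambda+1)^{3/2}$ shows it is exactly the bound $\psi''>-\sqrt{\nu}/(3\lambda+2)$, limiting how negative $\psi''$ can be; moreover \eqref{Eq:ConvexCondition2} implies $\tilde\phi>-\sqrt{\nu}\sqrt{2\lambda+1}$, so $\psi'<0$ with a strict margin. It then remains to verify the pointwise inequality $(r(\psi)-\psi)(\psi')^{2}\ge-\psi''$, for which I would insert a lower bound on the hazard-ratio deviation $r(\psi)-\psi$ and combine it with the margin supplied by \eqref{Eq:ConvexCondition2}.

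Finally, for the extremum characterization I would use symmetry. Each of the two log-functions above is invariant under permutations of $(\lambda^{mn})_{m\in\eta}$ and concave, hence Schur-concave; averaging over permutations (Jensen) shows that a symmetric concave function, restricted to the hyperplane \eqref{Eq:SNRConstraint}, is maximized at the barycenter $\lambda^{mn}=\bar\lambda$ for all $m\in\eta$. Since $\log$ is increasing, this locates the global maximum of $P_{\mathrm{FA,AND}}^{n}(\eta)$ and the global minimum of $P_{\mathrm{FA,OR}}^{n}(\eta)$ at the symmetric SNR profile. I expect the main obstacle to be the OR-rule second-derivative estimate: securing a clean yet sufficiently tight lower bound on $r(\psi)-\psi$ and checking that, together with \eqref{Eq:ConvexCondition2}, it dominates $-\psi''$ \emph{uniformly} over the relevant SNR range — the AND-rule step and the symmetrization argument being comparatively routine.
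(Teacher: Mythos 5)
Your route is the same one the paper sketches in the footnote to the proposition (the paper omits the actual proof and defers to the thesis): obtain quasiconcavity/quasiconvexity from log-concavity of $P_{\mathrm{FA,AND}}^{n}(\eta)$ and $1-P_{\mathrm{FA,OR}}^{n}(\eta)$, then locate the constrained extremum by symmetry. Two of your three steps are correct and complete: for the AND rule, (\ref{Eq:ConvexCondition1}) does force $\phi<0$ in (\ref{Eq:BetaDef}), so $g$ is convex and the composition of the nonincreasing concave $\log Q$ with $g$ is concave; and once each summand is concave, Jensen's inequality on the hyperplane (\ref{Eq:SNRConstraint}) immediately yields the barycenter as the constrained maximizer.

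The genuine gap is the OR-rule step you defer, and it cannot be repaired as planned: the pointwise inequality $(r(\psi)-\psi)(\psi')^{2}\ge-\psi''$ is \emph{false} under (\ref{Eq:ConvexCondition1})--(\ref{Eq:ConvexCondition2}). Writing $u=-\tilde\phi>0$ [note that (\ref{Eq:ConvexCondition1}) forces $\tilde\phi<0$, so you are always in your ``dangerous'' case], a direct expansion gives the identity
\begin{equation*}
(\psi')^{2}+\psi\,\psi''=\sqrt{\nu}\left(\sqrt{\nu}-\frac{u\,(3\lambda+2)}{(2\lambda+1)^{3/2}}\right),
\end{equation*}
so condition (\ref{Eq:ConvexCondition2}) is \emph{exactly} equivalent to $(\psi')^{2}>\psi\,|\psi''|$; that is, it is the condition you would need if the Gaussian hazard rate obeyed $r(x)-x\ge 1/x$. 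The true inequality is strictly the reverse: $r(x)-x<1/x$ for all $x>0$ (equivalently $Q(x)>x\,q(x)/(x^{2}+1)$, equivalently $r'<1$). Since concavity of $\log Q(\psi(\cdot))$ requires $(\psi')^{2}\ge|\psi''|/(r(\psi)-\psi)>\psi\,|\psi''|$, it demands strictly more than (\ref{Eq:ConvexCondition2}) supplies, and it really does fail near the threshold: with $\nu=13$, $\lambda=1$ (threshold $-\sqrt{13}\,3^{3/2}/5\approx-3.747$) and $\tilde\phi=-3.7$, one gets $\psi\approx2.803$, $\psi'\approx-1.469$, $\psi''\approx-0.712$, $r(\psi)\approx3.10$, hence $\tfrac{d^{2}}{d\lambda^{2}}\log Q(\psi)\approx+0.21>0$. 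Consequently, for $|\eta|=2$ and $\bar\lambda=1$ this summand is strictly convex at the symmetric point, which is then a strict local \emph{maximum} of $P_{\mathrm{FA,OR}}^{n}(\eta)$ along (\ref{Eq:SNRConstraint}) even though (\ref{Eq:ConvexCondition1})--(\ref{Eq:ConvexCondition2}) hold there strictly; no lower bound on $r(\psi)-\psi$ can prove an inequality that is false, and in fact the proposition's conclusion itself needs a margin in (\ref{Eq:ConvexCondition2}) under this per-point reading of the hypotheses. To close the argument you must replace the implicit bound $r(x)-x\ge1/x$ by a valid one, e.g.\ $r(x)-x\ge 2/\bigl(x+\sqrt{x^{2}+8}\bigr)$, and correspondingly strengthen (\ref{Eq:ConvexCondition2}); for parameters well inside the condition (e.g.\ $|\eta|=2$, $P_{\mathrm{MD}}^{n}(\eta)=10^{-4}$, giving $\tilde\phi\approx-2.33$) your deferred inequality does hold and the rest of your proof goes through.
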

Note that the constraints (\ref{Eq:ConvexCondition1}) and (\ref{Eq:ConvexCondition2}) are mild\cite{YLuThesis}. For example, they can be easily satisfied if: (i) $P^n_\mathrm{MD}(\eta)\geq 10^{-4}$ and $|\eta|\leq 13$ and (ii) ($\nu\geq 56$ for $\lambda^{mn}\geq 0$) or ($\nu\geq 13$ for $\lambda^{mn}\geq 1$) or ($\nu\geq 2$ for $\lambda^{mn}\geq 10$), i.e., when the MD constraint  $P^n_\mathrm{MD}(\eta)$ is not extremely stringent, the coalition size $|\eta|$ is not too large, and the number of collected samples for spectrum sensing $\nu$ is sufficiently large{\footnote{In practice, low PU-to-SU SNR (e.g. $\leq 1$) requires $\nu$ on the order of hundreds to thousands and medium-to-high PU-to-SU SNR (e.g. $\geq 10$) requires less than $10$ samples for good performance (See, e.g., \cite{Tradeoff,CoopSense}).}}.

Based on these observations, in the remainder of this paper we will assume the AND fusion rule to study a typical wireless CR network with a moderate number of cooperating SUs on each channel, which experience heterogeneous PU-to-SU channel conditions.

\section[Two-Layer Coalitional Sensing and Access Game]{Two-Layer Coalitional Sensing and Access Game\footnote{Proofs of Propositions~\ref{Prop:Characteristic} and \ref{Prop:GrandCoalFormation}--\ref{Prop:Converge} can be found in\cite{YLuCISS}.}}\label{Sec:Game}

\subsection{Grand Coalition Formation and Payoff Allocation in the Bottom-layer Game}\label{SubSec:BottomSol}
Consider the bottom-layer game $(S,U^n)$ on channel $n$.

\begin{Prop}\label{Prop:Characteristic}
For the 0/X-model, the game $(S,U_\mathrm{0/X}^n)$: 
\begin{enumerate}[(i)]
	\item reduces to a \emph{characteristic-form game}, i.e., the value of a bottom-layer coalition $\eta$ depends solely on the composition of $\eta$ \cite{CoalGameSurvey}: $U_\mathrm{0/X}^n(\eta;\rho^n)=U_\mathrm{0/X}^n(\eta;\tilde\rho^n)\triangleq U_\mathrm{0/X}^n(\eta)$, for any $\eta \subseteq S$ and any bottom-layer partitions $\rho^n$ and $\tilde\rho^n$ of $S$, such that $\eta \in \rho^n$  and  $\eta \in \tilde\rho^n$; 

\item is \emph{superadditive} \cite{CoalGameSurvey}, i.e., the SUs benefit from forming larger coalitions. Thus, for any two disjoint bottom-layer coalitions $\eta,\xi \subset S$, $U_\mathrm{0/X}^n(\eta \cup \xi)\geq U_\mathrm{0/X}^n(\eta)+U_\mathrm{0/X}^n(\xi)$; 

\item has efficient grand coalition \cite{Efficiency}, i.e., the grand coalition value is at least as large as the combined value of all coalitions in any other partition: $\forall\tilde\rho^n\neq\{S\}$, $U_\mathrm{0/X}(S)\geq \sum_{\eta\in\tilde\rho^n} U_\mathrm{0/X}(\eta)$. This property follows from (ii).
\end{enumerate}
\end{Prop}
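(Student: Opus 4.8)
The plan is to prove part~(i) first, since parts~(ii) and~(iii) then follow by short algebra and an induction. The crux of~(i) is that under the AND-rule the coalition FA probability factorizes over member SUs [cf.~(\ref{Eq:CoalFAResultAND})] and, crucially, that the individual FA probability $P_\mathrm{FA}^n(m)$ depends only on the SU $m$ and the total population $|S|$ through $\phi$ in~(\ref{Eq:BetaDef}), \emph{not} on the particular coalition $m$ belongs to. Since $|S|$ is fixed throughout the bottom-layer game on channel $n$, each $P_\mathrm{FA}^n(m)$ is a constant. I would then collapse the partition-dependent product in~(\ref{Eq:CoalValue0X}): because $\{\xi_1,\ldots,\xi_{|\rho^n|-1}\}$ partitions $S\backslash\eta$,
\[\prod_{i=1}^{|\rho^n|-1}P_\mathrm{FA}^n(\xi_i)=\prod_{i}\prod_{m\in\xi_i}P_\mathrm{FA}^n(m)=\prod_{m\in S\backslash\eta}P_\mathrm{FA}^n(m),\]
which no longer references the partition. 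Hence $U_\mathrm{0/X}^n(\eta;\rho^n)=\beta^n(1-P_\mathrm{FA}^n(\eta))\prod_{m\in S\backslash\eta}P_\mathrm{FA}^n(m)$ depends solely on $\eta$, which gives the characteristic form and justifies writing $U_\mathrm{0/X}^n(\eta)$.

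For part~(ii), I would abbreviate $A=\prod_{m\in\eta}P_\mathrm{FA}^n(m)$, $B=\prod_{m\in\xi}P_\mathrm{FA}^n(m)$ and the shared factor $R=\prod_{m\in S\backslash(\eta\cup\xi)}P_\mathrm{FA}^n(m)$. Using the characteristic form from~(i) together with disjointness of $\eta$ and $\xi$, the three relevant values read $U_\mathrm{0/X}^n(\eta)=\beta^n(1-A)BR$, $U_\mathrm{0/X}^n(\xi)=\beta^n(1-B)AR$ and $U_\mathrm{0/X}^n(\eta\cup\xi)=\beta^n(1-AB)R$. When $\beta^nR>0$, dividing the target inequality by $\beta^nR$ reduces superadditivity to $(1-A)(1-B)\geq 0$, which holds because every $P_\mathrm{FA}^n(m)\in[0,1]$ forces $A,B\in[0,1]$; the degenerate case $\beta^nR=0$ makes all three values vanish and the inequality trivial. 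Part~(iii) then follows by induction on the number of coalitions in an arbitrary partition $\tilde\rho^n$: repeatedly merging two coalitions via~(ii) never lowers the total value, so $\{S\}$ is at least as valuable as any $\tilde\rho^n$.

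The main obstacle is the conceptual step in~(i): recognizing that the AND-rule is precisely what decouples a coalition's value from the surrounding partition. Under the OR-rule this fails, since $P_\mathrm{FA}^n(m)$ would then depend on the coalition size $|\eta|$ [cf.~(\ref{Eq:TildeBetaDef})], the telescoping product over $S\backslash\eta$ would break, and the game would stay in genuine partition form. I would therefore invoke~(\ref{Eq:BetaDef}) explicitly to certify the coalition-independence of $P_\mathrm{FA}^n(m)$ before collapsing the product; everything afterward in~(ii)--(iii) is routine algebra and induction once this decoupling is secured.
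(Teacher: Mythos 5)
Your proof is correct. Note that the paper itself does not prove Proposition~\ref{Prop:Characteristic} --- it defers the proof to the conference version \cite{YLuCISS} --- but your argument is exactly the one the paper's setup supports: the AND-rule factorization (\ref{Eq:CoalFAResultAND}), together with the fact that $\phi$ in (\ref{Eq:BetaDef}) depends only on $|S|$ (and not on the coalition an SU belongs to), collapses the partition-dependent product in (\ref{Eq:CoalValue0X}) to $\prod_{m\in S\backslash\eta}P_\mathrm{FA}^n(m)$, giving (i); the reduction of (ii) to $(1-A)(1-B)\geq 0$ is sound, including the degenerate case $\beta^n R=0$; and the merge-by-merge induction for (iii) is valid precisely because (i) guarantees the values of untouched coalitions do not change under merging, which matches the paper's own remark that (iii) follows from (ii).
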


\begin{Prop} \label{Prop:EqualEfficient}
Under the 1/X-model, all bottom-layer partitions of $S$ are equally efficient, i.e., for any two partitions $\rho^n$ and $\tilde\rho^n$ of $S$, $\sum\nolimits_{\eta  \in \rho^n } {U_\mathrm{1/X}^n(\eta ;\rho^n)}  = \sum\nolimits_{\eta  \in \tilde\rho^n } {U_\mathrm{1/X}^n(\eta ;\tilde\rho^n )}$.
\end{Prop}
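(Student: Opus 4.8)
The plan is to compute the per-partition sum $\sum_{\eta \in \rho^n} U_\mathrm{1/X}^n(\eta; \rho^n)$ in closed form and exhibit a value that does not reference the partition at all. The guiding intuition is that, under the ideal-MAC 1/X-model, exactly one coalition transmits successfully whenever the channel is idle and at least one coalition avoids a false alarm; the summed coalition values should therefore equal $\beta^n$ times the probability that at least one coalition in $\rho^n$ correctly declares the channel idle. Once this is shown, the only partition-dependent ingredient left is the product $\prod_{\eta \in \rho^n} P_\mathrm{FA}^n(\eta)$, which I will argue is invariant under the AND-combining rule adopted in the remainder of the paper.

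First I would assign to each coalition $\eta_i$ in $\rho^n = \{\eta_1, \ldots, \eta_k\}$ the indicator $Y_i = 1 - X_{\eta_i} \in \{0,1\}$ for the event that $\eta_i$ does not raise a false alarm (and hence attempts to transmit), so the $Y_i$ are independent with $\Pr(Y_i = 1) = 1 - P_\mathrm{FA}^n(\eta_i)$ and $J_{\rho^n \backslash \{\eta_j\}} = \sum_{i \neq j} Y_i |\eta_i|$ [cf.~(\ref{Eq:J})]. Absorbing the prefactor $1 - P_\mathrm{FA}^n(\eta_j) = \Pr(Y_j = 1)$ into the expectation in (\ref{Eq:CoalValue1X}) rewrites each term as $U_\mathrm{1/X}^n(\eta_j; \rho^n) = \beta^n\,\mathbb{E}\big[Y_j|\eta_j| \big/ \sum_{i=1}^k Y_i|\eta_i|\big]$, where the ratio is read as $0$ on the event $Y_j = 0$. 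Summing over $j$ and interchanging the finite sum with the expectation gives $\sum_{j} U_\mathrm{1/X}^n(\eta_j; \rho^n) = \beta^n\,\mathbb{E}\big[\sum_j Y_j|\eta_j| \big/ \sum_i Y_i|\eta_i|\big]$. The bracketed quantity equals $1$ whenever some $Y_i = 1$ and $0$ when every $Y_i = 0$, so the expectation reduces to $\Pr(\exists\, i:\, Y_i = 1) = 1 - \prod_{i=1}^k P_\mathrm{FA}^n(\eta_i)$, yielding $\sum_{\eta \in \rho^n} U_\mathrm{1/X}^n(\eta; \rho^n) = \beta^n\big(1 - \prod_{\eta \in \rho^n} P_\mathrm{FA}^n(\eta)\big)$.

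It then remains to check that $\prod_{\eta \in \rho^n} P_\mathrm{FA}^n(\eta)$ is the same for every partition. Under the AND-rule the coalition FA probability (\ref{Eq:CoalFAResultAND}) factorizes over members, so $\prod_{\eta \in \rho^n} P_\mathrm{FA,AND}^n(\eta) = \prod_{\eta \in \rho^n}\prod_{m \in \eta} P_\mathrm{FA,AND}^n(m) = \prod_{m \in S} P_\mathrm{FA,AND}^n(m)$, the last step using that $\rho^n$ partitions $S = C^n(1)$. By (\ref{Eq:BetaDef}) the per-node threshold $\phi = Q^{-1}\big((1 - P_\mathrm{MD}^\mathrm{Ch})^{1/|S|}\big)$, and therefore each factor $P_\mathrm{FA,AND}^n(m) = Q(\sqrt{2\lambda^{mn}+1}\,\phi + \sqrt{\nu}\,\lambda^{mn})$, depends only on the fixed population $|S|$ and not on how $S$ is subdivided. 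Hence $\prod_{\eta \in \rho^n} P_\mathrm{FA}^n(\eta)$ is partition-invariant, and the summed values coincide for any two partitions $\rho^n$ and $\tilde\rho^n$ of $S$, proving the claim.

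I expect the only real care to be in the probabilistic identity rather than in any deep step. Interchanging the finite sum with the expectation is immediate, but one must treat the degenerate outcome in which all coalitions raise a false alarm, where $\sum_i Y_i|\eta_i| = 0$: on that event each summand carries the factor $Y_j = 0$ and contributes nothing, so the apparent $0/0$ never enters the expectation. Spelling this out is the one delicate point; the partition-independence step is then purely algebraic and rests solely on the product structure of the AND-rule together with the $|S|$-only dependence of $\phi$.
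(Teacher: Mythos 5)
Your proof is correct and follows essentially the same route as the paper's: both reduce the partition sum to $\beta^n\big(1-\prod_{\eta\in\rho^n}P_\mathrm{FA}^n(\eta)\big)=\beta^n\big(1-\prod_{m\in S}P_\mathrm{FA}^n(m)\big)$, the probability that the spectrum opportunity is not wasted, and then invoke the AND-rule factorization of (\ref{Eq:CoalFAResultAND}) together with the $|S|$-only dependence of the per-node threshold in (\ref{Eq:BetaDef}) to conclude partition-invariance. The only difference is that you derive the first identity by an explicit indicator/expectation computation, whereas the paper asserts it in one line from the observation that the opportunity is wasted exactly when every coalition raises a false alarm.
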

\begin{proof}
Under the 1/X-model, a spectrum opportunity on channel $n$ is wasted if all bottom-layer coalitions make FAs on the PU presence. Thus, for any bottom-layer partition $\rho^n$ of any $S$ on any channel $n$ [cf. (\ref{Eq:CoalFAResultAND})]
\begin{align}
\sum\limits_{\eta  \in \rho^n } {U_\mathrm{1/X}^n(\eta ;\rho^n )}  =& \beta^n(1 - \prod\limits_{\eta  \in \rho^n } {P_{{\mathrm{FA}}}^n(\eta )}\notag\\
=& \beta^n(1 - \prod\limits_{m \in S} {P_{{\mathrm{FA}}}^n(m)}) \label{Eq:EqualEfficientProof}
\end{align}
which is independent of $\rho^n$.
\end{proof}

\begin{Prop} \label{Prop:NegExternality}
Under the 1/X-model, the bottom-layer game $(S,U_\mathrm{1/X}^n)$ exhibits \emph{nonpositive externalities}\cite{Maskin}, i.e., a merger between two coalitions cannot improve the values of other coalitions. Formally, for any disjoint coalitions $\eta ,\xi ,\zeta  \subset S$ and any partition $\rho^n$ of $S$ such that  $\eta ,\xi ,\zeta  \in  \rho^n$, the coalition value
$U_\mathrm{1/X}^n(\eta ;\rho^n )  \geq U_\mathrm{1/X}^n\big(\eta ; \rho^n \backslash \{ \xi\} \backslash \{\zeta \}\cup \{ \xi  \cup \zeta \} \big)$.
\end{Prop}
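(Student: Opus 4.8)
The plan is to fix the common prefactor and then compare the expected MAC shares on a single, carefully chosen probability space. Writing $\tilde\rho^n=\rho^n\backslash\{\xi\}\backslash\{\zeta\}\cup\{\xi\cup\zeta\}$ for the post-merger partition, observe from (\ref{Eq:CoalValue1X}) that the factor $\beta^n(1-P_\mathrm{FA}^n(\eta))$ depends only on $\eta$ and is therefore identical in $U_\mathrm{1/X}^n(\eta;\rho^n)$ and $U_\mathrm{1/X}^n(\eta;\tilde\rho^n)$. Hence it suffices to show that the expected share $\mathbb{E}[\,|\eta|/(|\eta|+J)\,]$ does not increase under the merger, i.e., that merging $\xi$ and $\zeta$ can only raise the number $J$ of SUs competing with $\eta$.

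First I would move to the member level. For each SU $m\in S\backslash\eta$ let $Y_m$ be the indicator that $m$ declares the channel busy (its individual false alarm), with $\Pr(Y_m=1)=P_\mathrm{FA}^n(m)$, drawn independently; this same randomness is used for both partitions. Under the AND-rule a coalition $\kappa$ false-alarms iff \emph{all} of its members do [cf.\ (\ref{Eq:CoalFAResultAND})], so $\kappa$ contributes its $|\kappa|$ members to the competition count in (\ref{Eq:J}) exactly when at least one member of $\kappa$ has $Y_m=0$. The key algebraic fact driving the proof is the multiplicative structure $P_\mathrm{FA}^n(\xi\cup\zeta)=P_\mathrm{FA}^n(\xi)P_\mathrm{FA}^n(\zeta)$, or equivalently that the merged coalition detects the opportunity iff $\xi$ detects \emph{or} $\zeta$ detects.

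Next I would decompose $J_{\rho^n\backslash\{\eta\}}$ as $K+D$, where $K$ collects the (identical) contributions of the coalitions in $\rho^n\backslash\{\eta,\xi,\zeta\}$ and $D$ is the contribution of $\xi,\zeta$ (resp.\ of $\xi\cup\zeta$). On a fixed sample of the $Y_m$: if no member of $\xi\cup\zeta$ detects, then $D=0$ in both partitions; otherwise the merged coalition contributes all $|\xi|+|\zeta|$ of its members, whereas the separated partition contributes only the members of whichever of $\xi,\zeta$ happens to detect, i.e.\ at most $|\xi|+|\zeta|$. Thus $D_{\tilde\rho^n}\ge D_{\rho^n}$ pointwise, so $J$ under $\tilde\rho^n$ dominates $J$ under $\rho^n$ on every sample. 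Since $|\eta|/(|\eta|+K+D)$ is decreasing in $D$ for $|\eta|>0$ and $K,D\ge0$, the inequality reverses for the shares pointwise, and taking expectations gives $U_\mathrm{1/X}^n(\eta;\tilde\rho^n)\le U_\mathrm{1/X}^n(\eta;\rho^n)$, as claimed.

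The only real obstacle is recognizing that the AND-rule's multiplicative false-alarm structure lets one couple the two partitions on a \emph{single} probability space, reducing the claim to pointwise monotonicity rather than a comparison of two different distributions of $J$. Once that coupling is in place, every remaining step is routine. As an alternative to the coupling one could condition on $K$ and compare directly the four-point law of $D$ under $\rho^n$ with the two-point law of $D$ under $\tilde\rho^n$; writing $p=P_\mathrm{FA}^n(\xi)$, $q=P_\mathrm{FA}^n(\zeta)$ and $f(d)=|\eta|/(|\eta|+K+d)$, the difference of expected shares equals $p(1-q)[f(|\zeta|)-f(|\xi|+|\zeta|)]+(1-p)q[f(|\xi|)-f(|\xi|+|\zeta|)]$, a nonnegative combination of nonnegative terms because $f$ is decreasing.
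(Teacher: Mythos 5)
Your proposal is correct, and your primary (coupling) argument is a genuinely different route from the paper's. The paper proves Proposition~\ref{Prop:NegExternality} by direct expansion: it conditions on the FA indicators $\mathbf{\tilde x}$ of the coalitions other than $\eta,\xi,\zeta$, writes out $U_\mathrm{1/X}^n(\eta;\rho^n)$ and $U_\mathrm{1/X}^n(\eta;\tilde\rho^n)$ as the sums (\ref{Eq:NegExternalityBefore}) and (\ref{Eq:NegExternalityAfter}), invokes $P_\mathrm{FA}^n(\xi\cup\zeta)=P_\mathrm{FA}^n(\xi)P_\mathrm{FA}^n(\zeta)$ from the AND rule (\ref{Eq:CoalFAResultAND}) to cancel the matching terms, and exhibits the difference (\ref{Eq:NegExternalityProof}) as an average of the explicitly nonnegative quantities $A(\mathbf{\tilde x})$ in (\ref{Eq:A}). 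You instead realize both partitions on a single probability space --- the merged coalition's FA indicator is the product $X_\xi X_\zeta$ of the two independent unmerged indicators (your member-level $Y_m$'s accomplish this, though coalition-level indicators already suffice) --- show that the competing-SU count $J$ in (\ref{Eq:J}) can only increase pointwise under the merger, and conclude by monotonicity of $|\eta|/(|\eta|+J)$ and taking expectations. The coupling buys conceptual clarity (the only structural input is the multiplicative FA probability of the AND rule), avoids the term-matching bookkeeping, and generalizes verbatim to merging any number of coalitions, i.e., to arbitrary coarsenings of $\rho^n\backslash\{\eta\}$; the paper's expansion buys an explicit formula for the loss $U_\mathrm{1/X}^n(\eta;\rho^n)-U_\mathrm{1/X}^n(\eta;\tilde\rho^n)$, not merely its sign. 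Note that your closing ``alternative'' is not actually an alternative to the paper: conditioning on $K$ and comparing the four-point law of $D$ with the two-point law reproduces the paper's proof exactly, since your expression $p(1-q)[f(|\zeta|)-f(|\xi|+|\zeta|)]+(1-p)q[f(|\xi|)-f(|\xi|+|\zeta|)]$ coincides term-for-term with $A(\mathbf{\tilde x})$ in (\ref{Eq:A}), where $f(|\xi|)=|\eta|/b_1$, $f(|\zeta|)=|\eta|/b_2$, and $f(|\xi|+|\zeta|)=|\eta|/b_{12}$. One point both you and the paper leave tacit but worth stating: the coupling (and the product identity) requires that the individual FA rates $P_\mathrm{FA}^n(m)$ be unaffected by the merger, which holds for the AND rule because the threshold in (\ref{Eq:BetaDef}) depends only on $|S|$ and not on coalition sizes --- this is precisely where the argument would break for the OR rule, whose thresholds (\ref{Eq:TildeBetaDef}) change with $|\eta|$.
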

\begin{proof}
See Appendix~\ref{Sec:NegExternalityProof}.
\end{proof}

\begin{Prop}\label{Prop:GrandCoalFormation}
The grand coalition always forms in the bottom-layer game under the 1/X and 0/X MAC models.
\end{Prop}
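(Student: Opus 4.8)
The plan is to reduce both MAC models to a single mechanism: in a TU coalition-formation game whose value is dividable among its members [cf.\ the TU assumption stated after (\ref{Eq:CoalValue1X})], the grand coalition forms whenever the game is (effectively) \emph{superadditive}, i.e.\ merging any two bottom-layer coalitions yields a combined value no smaller than the sum of their separate values. Indeed, if $U^n(\eta\cup\xi;\rho')\geq U^n(\eta;\rho)+U^n(\xi;\rho)$ for every pair of disjoint coalitions $\eta,\xi$ in a partition $\rho$ (with $\rho'$ the partition obtained after the merger), then, since the coalition value is transferable and dividable, the members of $\eta\cup\xi$ can redistribute the merged value so that no member ends up worse off than before. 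Hence every merger is weakly Pareto-improving, no coalition has an incentive to split off, and iterating the merge operation from an arbitrary initial partition---each step reducing the number of coalitions by one---must terminate at the grand coalition $\{S\}$, since as long as two or more coalitions remain they can still profitably merge. It therefore suffices to establish this merging inequality under each model.

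For the 0/X-model this is immediate: by the superadditivity property in Proposition~\ref{Prop:Characteristic}(ii) we have $U_\mathrm{0/X}^n(\eta\cup\xi)\geq U_\mathrm{0/X}^n(\eta)+U_\mathrm{0/X}^n(\xi)$, and since the 0/X game is in characteristic form [Proposition~\ref{Prop:Characteristic}(i)] these values do not depend on the surrounding partition. The merging inequality thus holds directly and the grand coalition forms; equivalently, its efficiency is the content of Proposition~\ref{Prop:Characteristic}(iii).

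The 1/X-model is the substantive case, because the game is in partition form and, by Proposition~\ref{Prop:EqualEfficient}, \emph{all} partitions are equally efficient, so no total value is gained by merging. I would recover the merging inequality by combining equal efficiency with the nonpositive-externality property of Proposition~\ref{Prop:NegExternality}. Fix a partition $\rho$ containing disjoint coalitions $\eta$ and $\xi$, set $\rho'=\rho\backslash\{\eta\}\backslash\{\xi\}\cup\{\eta\cup\xi\}$, write the equal-efficiency identity of Proposition~\ref{Prop:EqualEfficient} for both $\rho$ and $\rho'$, and subtract; solving for the merged value gives
\begin{equation*}
U_\mathrm{1/X}^n(\eta\cup\xi;\rho')=U_\mathrm{1/X}^n(\eta;\rho)+U_\mathrm{1/X}^n(\xi;\rho)+\sum_{\zeta}\big[U_\mathrm{1/X}^n(\zeta;\rho)-U_\mathrm{1/X}^n(\zeta;\rho')\big],
\end{equation*}
where $\zeta$ ranges over the bystander coalitions in $\rho$ with $\zeta\neq\eta,\xi$. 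By Proposition~\ref{Prop:NegExternality} the merger of $\eta$ and $\xi$ cannot raise any bystander's value, so each bracketed term is nonnegative and the merging inequality $U_\mathrm{1/X}^n(\eta\cup\xi;\rho')\geq U_\mathrm{1/X}^n(\eta;\rho)+U_\mathrm{1/X}^n(\xi;\rho)$ follows. Thus the 1/X game is effectively superadditive, and the grand coalition forms by the argument of the first paragraph.

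The main obstacle is precisely this 1/X case: because the value depends on the entire partition, superadditivity cannot be read off directly and must be synthesized from the two partition-form properties. The key idea---that a fixed total value (Proposition~\ref{Prop:EqualEfficient}) together with nonpositive externalities (Proposition~\ref{Prop:NegExternality}) forces the redistributed value to concentrate in the merged coalition---is what converts a seemingly neutral merger (zero total gain) into one that is beneficial for the merging parties. A secondary point I would state carefully is the formation/stability notion itself: under TU with dividable value, a weakly value-increasing merger is always acceptable to both coalitions, which is what legitimizes iterating mergers up to $\{S\}$.
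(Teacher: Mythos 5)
Your proof is correct, and it rests on exactly the same three ingredients the paper invokes---Propositions~\ref{Prop:Characteristic}--\ref{Prop:NegExternality}---but it assembles them differently for the 1/X case. The paper (deferring details to its conference version) treats the two models by separate, literature-backed criteria: superadditivity for 0/X, and then \emph{either} weak efficiency (Proposition~\ref{Prop:EqualEfficient}) \emph{or} nonpositive externalities (Proposition~\ref{Prop:NegExternality}) as standalone sufficient conditions for grand-coalition formation in partition-form games, citing the formation notions of its references. You instead combine the two 1/X properties: writing the equal-efficiency identity for $\rho$ and for the post-merger partition $\rho'$, subtracting, and bounding the bystander terms via nonpositive externalities, you obtain the explicit merging inequality $U_\mathrm{1/X}^n(\eta\cup\xi;\rho')\geq U_\mathrm{1/X}^n(\eta;\rho)+U_\mathrm{1/X}^n(\xi;\rho)$, which reduces both MAC models to a single superadditivity-plus-TU merging mechanism; your relabeled application of Proposition~\ref{Prop:NegExternality} to the bystanders is legitimate since all coalitions involved are disjoint members of $\rho$. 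What this buys is a self-contained argument that makes transparent where the mergers' gain comes from (it is exactly the bystanders' loss, since the total is conserved). The one caveat you should state more carefully is that, precisely because of Proposition~\ref{Prop:EqualEfficient}, every merger under the 1/X model is only \emph{weakly} improving, so your iteration up to $\{S\}$ relies on the convention that indifferent coalitions still merge; the paper sidesteps this by adopting the formation viewpoints of its cited references, and both treatments defer the question of whether the grand coalition is stable against splitting under the actual allocated payoffs to the NBS/\textit{f}NBS analysis of Propositions~\ref{Prop:NBS} and~\ref{Prop:fNBS}.
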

The proof of Proposition~\ref{Prop:GrandCoalFormation} is based on Propositions~\ref{Prop:Characteristic}--\ref{Prop:NegExternality} and follows from the superadditivity property \cite{CoalGameSurvey} for the 0/X model and the \emph{weak efficiency} or the nonpositive externality property \cite{Maskin, Efficiency} for the 1/X model \cite{YLuCISS}\footnote{There are several different points of view on the grand coalition formation in the game theory literature. Here we employ the approaches in\cite{CoalGameSurvey,Maskin, Efficiency}.}. Proposition~\ref{Prop:GrandCoalFormation} implies that all SUs sensing channel $n$ cooperate, and MAC is not utilized. However, MAC needs to be assumed to compute the hypothetical payoff that an SU \emph{could have obtained} if it left the grand coalition and/or the \emph{marginal value} it brings into the grand coalition. Such information is crucial when determining the actual payoff allocation to ensure the stability of and the fairness within the grand coalition \cite{GameSurvey}.

Next, we discuss the allocation of the value of the grand coalition $U^n(S;\{S\})$ (hereafter referred to as $U^n(S)$ for brevity) while satisfying every member SU with its allocated payoff. Fair payoff allocation rules for traditional single-layer coalitional games are extensively studied in the literature assuming (hypothetical) breakdown of the grand coalition, e.g., the Shapley value \cite{CoalGameSurvey}, the Owen value \cite{Owen}, the nucleolus\cite{CoalGameSurvey}, the Nash bargaining solution (NBS)\cite{NBS,NBSNeg}, etc. In this paper, we employ NBS due to its computational efficiency\cite{NBS}. The NBS payoff allocation method distributes the overall value of the grand coalition among the individual players according to the \emph{disagreement point}, which characterizes the minimum payoffs the players are willing to accept. Given the disagreement point, NBS splits the overall profit of cooperation equally among the players and maximizes the product of the players' profits\cite{NBS}.

\begin{Prop}\label{Prop:NBS}	For the characteristic-form (cf. Proposition \ref{Prop:Characteristic}) bottom-layer game $(S,U_\mathrm{0/X}^n)$, NBS assigns a payoff, $\forall m\in S$ \cite{NBS,NBSNeg}
\begin{equation}\label{Eq:NBS}
\begin{split}
a^{mC}_\mathrm{NBS,0/X}=\frac{U_\mathrm{0/X}^n(S)-\sum\limits_{i\in S}U_\mathrm{0/X}^n(\{i\})}{|S|}+ U_\mathrm{0/X}^n(\{m\})
\end{split}
\end{equation}
where the disagreement point is $\big\{U_\mathrm{0/X}^n(\{m\}):m\in S \big\}$.
\end{Prop}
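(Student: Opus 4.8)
The plan is to recognize the claimed allocation as the standard ``equal split of the surplus'' that the Nash bargaining solution (NBS) produces in a transferable-utility setting, and to verify that the hypotheses of Proposition~\ref{Prop:Characteristic} and Proposition~\ref{Prop:GrandCoalFormation} guarantee that this setting applies. First I would recall the defining property of the NBS: given a disagreement point $d=(d_m)_{m\in S}$ and a feasible utility set $\mathcal{F}$, the NBS is the point of $\mathcal{F}$ with $a^{mC}\ge d_m$ for all $m$ that maximizes the \emph{Nash product} $\prod_{m\in S}(a^{mC}-d_m)$. Since Proposition~\ref{Prop:Characteristic}(i) shows that $(S,U_\mathrm{0/X}^n)$ is a characteristic-form game and Proposition~\ref{Prop:GrandCoalFormation} establishes that the grand coalition forms, and since the grand-coalition value $U_\mathrm{0/X}^n(S)$ is by assumption fully transferable among members, the feasible set is the TU simplex
\begin{equation*}
\mathcal{F}=\Big\{(a^{mC})_{m\in S}:\sum\nolimits_{m\in S}a^{mC}=U_\mathrm{0/X}^n(S)\Big\},
\end{equation*}
and I would take the natural disagreement point $d_m=U_\mathrm{0/X}^n(\{m\})$, the value an SU can secure by sensing alone.

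Next I would check that the bargaining problem is well posed, i.e.\ that the surplus
\begin{equation*}
\sigma\triangleq U_\mathrm{0/X}^n(S)-\sum\nolimits_{m\in S}U_\mathrm{0/X}^n(\{m\})
\end{equation*}
is nonnegative, so that $\mathcal{F}$ contains points with $a^{mC}\ge d_m$ and the feasible bargaining set has nonempty relative interior. This follows directly from the superadditivity in Proposition~\ref{Prop:Characteristic}(ii) (equivalently, from the efficient-grand-coalition property (iii)) applied to the singleton partition $\{\{i\}:i\in S\}$.

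Finally I would solve the optimization. Substituting $b_m\triangleq a^{mC}-d_m\ge0$ reduces the problem to maximizing $\prod_{m\in S}b_m$ subject to $\sum_{m\in S}b_m=\sigma$. Because $\log(\cdot)$ is strictly concave, the objective is maximized at the unique symmetric point $b_m=\sigma/|S|$ for all $m$ (equivalently, this is the equality case of the AM--GM inequality, or the unique KKT point of the Lagrangian $\sum_m\log b_m-\mu(\sum_m b_m-\sigma)$). Unwinding the substitution gives
\begin{equation*}
a^{mC}_\mathrm{NBS,0/X}=U_\mathrm{0/X}^n(\{m\})+\frac{U_\mathrm{0/X}^n(S)-\sum_{i\in S}U_\mathrm{0/X}^n(\{i\})}{|S|},
\end{equation*}
which is exactly (\ref{Eq:NBS}); individual rationality $a^{mC}\ge d_m$ is automatic since $\sigma\ge0$. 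I expect the only nonroutine step to be the modeling justification that the feasible set is the TU simplex with disagreement point given by the stand-alone values $U_\mathrm{0/X}^n(\{m\})$; once this is fixed, the maximization is an immediate AM--GM / Lagrange computation, and the nonnegativity of $\sigma$ required for feasibility is supplied by the superadditivity already proven in Proposition~\ref{Prop:Characteristic}.
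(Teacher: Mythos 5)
Your proof is correct and takes essentially the approach the paper intends: the paper never proves Proposition~\ref{Prop:NBS} in-text, but instead cites the standard NBS references (and its conference precursor), with its surrounding prose --- ``NBS splits the overall profit of cooperation equally among the players and maximizes the product of the players' profits'' --- describing exactly the Nash-product maximization over the TU simplex with stand-alone disagreement point that you carry out. Your explicit check that the surplus is nonnegative via superadditivity (Proposition~\ref{Prop:Characteristic}) is a detail the paper leaves implicit, but it is the same argument, not a different route.
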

\noindent The $m$th component of the disagreement point $U_\mathrm{0/X}^n(\{m\})$ is chosen as the payoff SU $m$ can obtain with noncooperative sensing, i.e., if bargaining fails. From Proposition~\ref{Prop:Characteristic}, $U_\mathrm{0/X}^n(\{m\})\triangleq U_\mathrm{0/X}^n(\{m\};\rho^n)$ for any partition $\rho^n$ that includes $m$ as a singleton, i.e., it does not depend on actions of other SUs. Since SUs with stronger sensing capabilities have higher singleton values $U_\mathrm{0/X}^n(\{m\})$, the NBS (\ref{Eq:NBS}) provides each SU with the allocated payoff commensurate with its sensing contribution, resulting in a fair payoff allocation.

When the 1/X-model is assumed, the selfish individual payoff $U_\mathrm{1/X}^n(\{m\};\rho^n)$ depends not only on that SU's sensing capability, but also on other SUs' reactions when a given SU does not cooperate. In this paper, we employ the \textit{fine} NBS (\textit{f}NBS) \cite{NBSNeg}, which implies all SUs stand alone if bargaining is not successful, i.e., if the grand coalition breaks down.

\begin{Prop} \label{Prop:fNBS}Under the 1/X model, the payoff of each SU $m\in S$ using \textit{f}NBS is \cite{NBSNeg}
\begin{align}\label{Eq:fNBS}
a_{{\mathrm{\textit{f}NBS,1/X}}}^{mC}&=\frac{1 }{|S|}\bigg[U_\mathrm{1/X}^n(\{ S\} ) - \smashoperator{\sum\limits_{i \in S}} U_\mathrm{1/X}^n\left(\{ i\} ;\{ \{ j\} :{j {\in} S}\} \right)\bigg]\notag\\
&\quad\quad\quad+U_\mathrm{1/X}^n (\{ m\} ;\{ \{ j\} :{j {\in} S}\})\notag\\
&= U_\mathrm{1/X}^n(\{ m\} ; \{ \{ j\} :{j {\in} S}\} ).
\end{align}
\end{Prop}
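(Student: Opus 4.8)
The plan is to recognize that the nontrivial content of this proposition — the collapse of the \textit{f}NBS payoff in the second line of (\ref{Eq:fNBS}) to the lone-standing value $U_\mathrm{1/X}^n(\{m\};\{\{j\}:j\in S\})$ — is a direct consequence of Proposition~\ref{Prop:EqualEfficient}. The first line of (\ref{Eq:fNBS}) has the standard Nash-bargaining structure, mirroring (\ref{Eq:NBS}): an equal per-player share of the \emph{profit of cooperation} (the bracketed term) added to that player's disagreement payoff. The disagreement point selected by \textit{f}NBS is the vector of payoffs each SU secures when \emph{all} SUs stand alone, i.e.\ the singleton values $U_\mathrm{1/X}^n(\{i\};\{\{j\}:j\in S\})$ evaluated under the all-singleton partition. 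Thus proving the claim reduces to showing that the bracketed profit of cooperation vanishes, i.e.\ that the grand-coalition value equals the sum of these lone-standing values.

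First I would isolate the bracketed term from the first line of (\ref{Eq:fNBS}),
\[
U_\mathrm{1/X}^n(\{S\}) - \sum_{i \in S} U_\mathrm{1/X}^n\big(\{i\};\{\{j\}:j\in S\}\big),
\]
and then invoke Proposition~\ref{Prop:EqualEfficient} with the two specific partitions $\rho^n=\{S\}$ (the grand coalition) and $\tilde\rho^n=\{\{j\}:j\in S\}$ (the finest, all-singleton partition). Equal efficiency yields
\[
\sum_{\eta \in \{S\}} U_\mathrm{1/X}^n(\eta;\{S\}) \;=\; \sum_{\eta \in \{\{j\}:j\in S\}} U_\mathrm{1/X}^n\big(\eta;\{\{j\}:j\in S\}\big),
\]
whose left-hand side is the single term $U_\mathrm{1/X}^n(S;\{S\})=U_\mathrm{1/X}^n(\{S\})$ and whose right-hand side is exactly $\sum_{i\in S}U_\mathrm{1/X}^n(\{i\};\{\{j\}:j\in S\})$. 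Hence the bracketed term is zero, the $\tfrac{1}{|S|}$-weighted share drops out, and the payoff collapses to $U_\mathrm{1/X}^n(\{m\};\{\{j\}:j\in S\})$, which is the asserted second equality.

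There is essentially no analytic obstacle, since Proposition~\ref{Prop:EqualEfficient} does all the work; the only point requiring care is matching the two partitions correctly when instantiating it. What is worth stating explicitly is \emph{why} the surplus vanishes: under the 1/X-model the ideal MAC guarantees that a detected opportunity is exploited whenever at least one coalition avoids a false alarm, so by (\ref{Eq:EqualEfficientProof}) the aggregate successful-transmission probability $\beta^n(1-\prod_{m\in S}P_\mathrm{FA}^n(m))$ is invariant to how $S$ is partitioned. Cooperation therefore generates no aggregate gain to divide, and \textit{f}NBS — which measures gains against the all-singleton disagreement point — leaves each SU with precisely what it would obtain on its own. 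I would close, space permitting, by contrasting this with the strictly superadditive 0/X-model of Proposition~\ref{Prop:Characteristic}, where the analogous surplus in (\ref{Eq:NBS}) is strictly positive and the bargaining redistribution is nontrivial.
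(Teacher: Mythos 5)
Your proof is correct and follows the intended route: the first line of (\ref{Eq:fNBS}) is just the \textit{f}NBS definition with the all-singleton disagreement point, and the collapse to $U_\mathrm{1/X}^n(\{m\};\{\{j\}:j\in S\})$ follows by instantiating Proposition~\ref{Prop:EqualEfficient} with $\rho^n=\{S\}$ and $\tilde\rho^n=\{\{j\}:j\in S\}$, which makes the bracketed cooperation surplus vanish. The paper defers the formal proof to its earlier conference version, but the argument there is exactly this one, so no further comparison is needed.
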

Thus, the allocated payoff probability is simply given by the successful transmission probability that each SU could have obtained individually assuming all other SUs have also formed singletons. However, this does not imply that SUs should deviate from the grand coalition, because each SU is not guaranteed a payoff of $ a_{{\mathrm{\textit{f}NBS,1/X}}}^{mC}$ if it deviates. In fact, if an SU decides to remain isolated, its payoff is at risk due to the nonpositive externalities (Proposition~\ref{Prop:NegExternality}) if other SUs collude. As a result, an SU $m$ might end up with a much worse payoff than its singleton value $U_\mathrm{1/X}^n(\{ m\} {;}{ \{ \{ j\} :{j {\in} S}\}})$. Therefore, every SU has an incentive to join the grand coalition. Moreover, we found that a \textit{coarse} NBS (\textit{c}NBS) \cite{NBSNeg} allocation where all other SUs form a coalition and try to reduce the hypothetical individual payoff of the deviating SU, results in similar throughput and energy consumption as that of the \textit{f}NBS method in Proposition~\ref{Prop:fNBS} \cite{YLuThesis}.

\renewcommand{\algorithmicrequire}{\textbf{Input:}}
\renewcommand{\algorithmicensure}{\textbf{Output:}}
\begin{figure}[!t]

\hrule\hrule\vspace{\Extra}
\textbf{Algorithm 1} Bottom-layer payoff allocation
\vspace{\Extra}
\hrule
\begin{algorithmic}[1]
\Require $C{=}(S,n)$, $\{\lambda^{mn}:m{\in}S\}$, $P_\mathrm{MD}^\mathrm{Ch}$, $\nu$, $\beta^n$
\Ensure $\{a^{mC}:m{\in}S\}$
\State $P^n_\mathrm{FA}(m){=}Q\Big(\sqrt{2\lambda^{mn}{+}1}Q^{-1}((1-P_\mathrm{MD}^\mathrm{Ch})^\frac{1}{\left | S \right |}){+}\lambda^{mn}\sqrt{\nu}\Big)$\label{Line:AdaptTh}
\If   {0/X-MAC}{ $\forall m\!\in\! S$:}
\State $U_\mathrm{0/X}^n(S)=\beta^n(1-\prod_{i{\in}S}P_\mathrm{FA}^n(i))$
\State $U_\mathrm{0/X}^n(\{m\}){=}\beta^n(1-P_\mathrm{FA}^n(m))\prod _{i\in S\backslash\{m\}}P_\mathrm{FA}^n(i)$ 
\State Compute $a^{mC}_\mathrm{NBS,0/X}$ using  (\ref{Eq:NBS})
\EndIf
\If {1/X-MAC}{ $\forall m\!\in\! S$:}
\State {$a_{{\mathrm{\textit{f}NBS,1/X}}}^{mC} = U_\mathrm{1/X}^n(\{ m\} ;\{ \{ j\} :{j {\in} S}\} )$ as in (\ref{Eq:CoalValue1X})}
\EndIf
\end{algorithmic}
\hrule
\caption{Bottom-layer payoff allocation.}
\label{Fig:AlgBottom}
\end{figure}

Moreover, the bargaining is always successful, i.e., $a^{mC^n}\geq U^n(\{ m\})$ for all $m$ and both MAC models (cf. Propositions~\ref{Prop:Characteristic}--\ref{Prop:fNBS} and \cite{NBSNeg}). Once the negotiation is settled, all SUs in $S$ sign an agreement on the sharing of true sensing results as well as the payoff allocation $\{a^{mC^n}:m\in S\}$. Also specified in this agreement are the transmission coordination protocol and the $m$th SU's fair share of a sensed available slot, given by $\{a^{mC^n}/{\sum_{i\in S}}a^{iC^n}:m\in S\}$ (see Example at the end of Section~\ref{SubSec:BottomDef}). All coalition members should thereafter strictly conform to the finalized agreement\footnote{We assume the SUs are selfish, but honest. In practice, an SU may falsify its sensing result or otherwise breach an earlier agreement, which can be detected  by, e.g., a reputation-based approach \cite{Reputation}.}.  The bottom-layer payoff allocation algorithm is summarized in Fig.~\ref{Fig:AlgBottom}.

\subsection{Coalition Formation at the Top Layer}\label{SubSec:TopSol}
In the top-layer game, SUs use (\ref{Eq:Preference}) to determine if switching to another channel is advantageous. The proposed algorithm in Fig.~\ref{Fig:AlgTop} employs a distributed switching scheduling scheme to facilitate fast convergence to an SU network partition. At most one switch is allowed in each time slot. Initially, all SUs actively compete for the right to switch using an out-of-band control channel \cite{ContrChSurvey}. In every time slot, the winning SU $m$ randomly chooses a potential new channel and notifies other SUs in $\mathcal{M}\backslash \{m\}$ about its action by broadcasting (on the control channel) (i) a SWITCH signal if (\ref{Eq:Preference}) holds, (ii) a HOLD signal if (\ref{Eq:Preference}) fails and it still plans to continue searching, or (iii) a SLEEP signal when all switching opportunities have been exhausted unsuccessfully, i.e., the current channel is the most preferable for this SU. Competition for the right to switch continues until all SUs are asleep, indicating convergence of the partition formation process. Note that all SUs (including sleeping SUs) should cognitively monitor the environment changes, e.g., variation of the PU-to-SU SNRs, the PU/SU locations, the number of SUs, the number of channels, etc., and repeat the partition formation algorithm in Fig.~\ref{Fig:AlgTop} when changes occur (See Section~\ref{Sec:Sim}). Assuming that these parameters are fixed, the following Proposition holds:
\begin{Prop}\label{Prop:Converge} The proposed coalition formation algorithm (Fig.~\ref{Fig:AlgTop}) converges in at most $N^M$ switches.
\end{Prop}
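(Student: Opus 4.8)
The plan is to recognize the top-layer process as the trajectory of a finite \emph{exact potential game} and to bound the number of switches by the size of the partition space. First I would introduce the social potential
\[
\Phi(\mathcal{P}) \;=\; \sum_{n=1}^{N}\;\sum_{i\in C^n(1)} a^{iC^n},
\]
the total of all allocated bottom-layer payoffs under the current top-layer partition $\mathcal{P}=\{C^1,\ldots,C^N\}$. Since each SU senses exactly one channel, a top-layer partition is nothing but a map $\mathcal{M}\to\mathcal{N}$, so the set of admissible partitions is finite with cardinality at most $N^M$, and $\Phi$ is automatically bounded because every $a^{iC^n}$ is a probability in $[0,1]$.

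Next I would establish the \emph{locality} of a single switch. When SU $m$ moves from channel $n$ to $\tilde n$, only the coalitions on these two channels change composition; every other top-layer coalition $C^{n'}$ ($n'\neq n,\tilde n$) keeps its SU set $C^{n'}(1)$ intact. By Proposition~\ref{Prop:GrandCoalFormation} the grand coalition forms on each channel, so the per-channel payoffs are given by (\ref{Eq:NBS}) for the 0/X-model and (\ref{Eq:fNBS}) for the 1/X-model; in both cases $a^{iC^{n'}}$ depends \emph{only} on the SU set $C^{n'}(1)$ and is insensitive to the configuration on the other channels. Hence the change in $\Phi$ produced by the move equals exactly the difference between the post-move and pre-move payoff sums restricted to channels $n$ and $\tilde n$, which is precisely condition (ii) of the preference relation (\ref{Eq:Preference}). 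Because a SWITCH signal is issued only when (\ref{Eq:Preference}) holds, every accepted switch forces $\Phi(\tilde{\mathcal{P}})>\Phi(\mathcal{P})$; condition (i) merely adds an individual-improvement requirement that can only reduce the number of accepted switches and therefore does not weaken the bound.

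Finally I would close the argument by strict monotonicity. The sequence of partitions $\mathcal{P}_0,\mathcal{P}_1,\ldots$ produced by the algorithm of Fig.~\ref{Fig:AlgTop} satisfies $\Phi(\mathcal{P}_0)<\Phi(\mathcal{P}_1)<\cdots$, and a strictly increasing real sequence never repeats a value, so no partition is ever revisited. As there are at most $N^M$ distinct partitions, the process visits at most $N^M$ of them and hence performs at most $N^M-1\le N^M$ switches, after which every SU broadcasts SLEEP — i.e.\ no profitable unilateral move remains, which is exactly the Nash-stability target of Section~\ref{SubSec:TopDef}.

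The main obstacle, and the only step needing genuine care, is the locality claim: one must verify that the two summations over $\tilde C^n(1)$ and $\tilde C^{\tilde n}(1)$ in (\ref{Eq:Preference}) truly capture the \emph{entire} change in $\Phi$. This rests on the grand-coalition formation of Proposition~\ref{Prop:GrandCoalFormation} together with the explicit allocations (\ref{Eq:NBS}) and (\ref{Eq:fNBS}), which show that each channel's payoff is computed from its SU set alone. Once this per-channel decoupling is in hand, the remainder is the standard finite-potential-game termination argument and requires no further computation.
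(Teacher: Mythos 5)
Your proof is correct and follows essentially the argument the paper intends (the paper defers the details to \cite{YLuCISS}): condition (ii) of the preference relation (\ref{Eq:Preference}) makes the total allocated payoff a strictly increasing potential over the finite set of at most $N^M$ channel assignments, so no partition can repeat and the algorithm halts within $N^M$ switches. Your explicit verification of the locality/decoupling of per-channel payoffs via Proposition~\ref{Prop:GrandCoalFormation} and the allocations (\ref{Eq:NBS}), (\ref{Eq:fNBS}) is exactly the step that makes the potential argument rigorous.
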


\renewcommand{\algorithmicrequire}{\textbf{Input:}}
\renewcommand{\algorithmicensure}{\textbf{Output:}}
\begin{figure}[!t]
\hrule\hrule\vspace{\Extra}
\textbf{Algorithm 2} Distributed top-layer partition formation
\vspace{\Extra}
\hrule
\begin{algorithmic}[1]
\Require $\mathcal{M},\mathcal{N},\beta^\mathcal{N},\gamma^{\mathcal{M}\times \mathcal{N}},\lambda^{\mathcal{M}\times \mathcal{N}}$
\Ensure $\mathcal{P}$
\State \textbf{Initialization}: Each SU $m$ randomly senses a channel $n_*^m$
\State $C^n(1)=\{m:n_*^m=n\}$ and $C^n(2)=n$, $\forall n\in \mathcal{N}$
\State $\mathcal{P}=\{C^1,\cdots, C^N\}$ and $\textrm{Action}=\textrm{SWITCH}$
\While{$\mathcal{M}_\textrm{Active}\neq\emptyset$ \textbf{(at each time slot)}:}\label{Line:While}
\If {$\textrm{Action}=\textrm{SWITCH}$}: {$\mathcal{M}_\mathrm{Active}=\mathcal{M}$}
\EndIf
\If {$\textrm{Action}\neq\textrm{HOLD}$}:
\State SUs in $\mathcal{M}_\mathrm{Active}$ contend for the right to switch
\State SU $m\in C^n(1)$ wins and $\mathcal{N}_\mathrm{Candidate}=\mathcal{N}\backslash \{n\}$
\EndIf
\State $m$ randomly chooses channel ${\tilde{n}}\in\mathcal{N}_\mathrm{Candidate}$
\State $\tilde{C}^n=(C^n(1)\backslash \{m\},n)$, $\tilde{C}^{\tilde{n}}=(C^{\tilde{n}}(1){\cup}\{m\},\tilde{n})$
\State $m$ plays the bottom-layer game $(\tilde{C}^{\tilde{n}},U^{\tilde{n}})$ in Fig.~\ref{Fig:AlgBottom}\label{Line:Play}
\If   {(\ref{Eq:Preference}) holds}:\label{Line:CheckPrefer}
\State $a^{iC^n}=a^{i\tilde{C}^n}$, $a^{jC^{\tilde{n}}}=a^{j\tilde{C}^{\tilde{n}}}$,  $\forall i\in C^n(1)$, $j\in C^{\tilde{n}}(1)$
\State  $\mathcal{P}=\mathcal{P} \backslash\{C^n\}\backslash\{C^{\tilde{n}}\}\cup\{{ \tilde{C}^n}\}\cup \{ \tilde{C}^{\tilde{n}}\}$\State $C^n=\tilde{C}^n$, $C^{\tilde{n}}= \tilde{C}^{\tilde{n}}$,  $\textrm{Action}=\textrm{SWITCH}$
\Else :
\State $m$ stays on channel $n$ and $\mathcal{N}_\mathrm{Candidate}=\mathcal{N}_\mathrm{Candidate}\backslash \{\tilde{n}\}$
\If{$\mathcal{N}_\mathrm{Candidate}\neq\emptyset$}: {$\textrm{Action}=\textrm{HOLD}$}
\Else : {$\mathcal{M}_\mathrm{Active}=\mathcal{M}\backslash \{m\}$, $\textrm{Action}=\textrm{SLEEP}$}
\EndIf
\EndIf
\If{a slot is sensed idle by $C^n=(S,n)$} $\forall m\in S$,
\State{fair share of the slot for $m$th SU's transmission is $a^{mC^n}/{\sum_{i\in S}}a^{iC^n}$}
\EndIf
\EndWhile\label{Line:Converge}
\end{algorithmic}
\hrule
\caption{Distributed top-layer partition formation.}
\label{Fig:AlgTop}
\end{figure}

\subsection{Computational complexity and overhead of the two-layer game}\label{SubSec:Complexity}
The computational complexity of the proposed game is dominated by the FA rate computations (line \ref{Line:AdaptTh} in Fig~\ref{Fig:AlgBottom}). For SU $m$ on channel $n$ [cf. (\ref{Eq:CoalFAResultAND}) and (\ref{Eq:BetaDef})], this rate is determined by $\lambda^{mn}$ and SU population size $|C_t^n(1)|$, where $t$ is the time slot index. Each SU $m$ on channel $n$ maintains three values of its FA rate, corresponding to the current and potential SU populations sizes $|C_t^n(1)|\pm i$, $i=0,1$, and this FA rate information is exchanged over the control channel as needed. Thus, the initialization step requires $3M$ FA rate computations. When an SU explores whether to switch to channel $\tilde{n}$ in time slot $t$ using (\ref{Eq:Preference}), it needs to compute its potential FA rate on that channel, corresponding to the updated SU population size $|C_t^{\tilde{n}}(1)|+1$ and its PU-to-SU SNR $\lambda^{m\tilde{n}}$, resulting in one FA computation per time slot. It then requests the FA rate information from SUs residing on that channel. In those slots where an SU $m$ actually switches to channel $\tilde{n}$, new coalitions $\tilde{C}_t^n$ and $\tilde{C}_t^{\tilde{n}}$ form [cf. (\ref{Eq:Preference})], and SU $m$ needs to compute two additional FA rates corresponding to the SU population sizes $|\tilde{C}_t^{\tilde{n}}(1)|\pm 1$ while all other SUs in $\tilde{C}_t^n(1)$ and $\tilde{C}_t^{\tilde{n}}(1)$ have to compute only one additional FA rate each, corresponding to SU population sizes decreased and increased by 2 on their respective channels $n$ and $\tilde{n}$. Since each FA rate computation takes $\mathcal{O}(1)$ time, the total computational complexity of the algorithm until convergence is
\begin{equation}
\mathcal{O}\Big(3M+T_\textrm{Converge}+\!\sum_{t\in\mathcal{T}_\textrm{Switch}}(|\tilde{C}_t^n(1)|+|\tilde{C}_t^{\tilde{n}}(1)|+1)\Big )
\label{Eq:Complexity}
\end{equation}
where $\mathcal{O}(\cdot)$ is the big O notation\cite{CLRS}, $\mathcal{T}_\textrm{Switch}$ is the set of time slots in which an SU switches in the interval $[1,T_\textrm{Converge}]$ and $T_\textrm{Converge}$ denotes the convergence time, i.e., the number of slots needed to execute the while loop from line~\ref{Line:While} to \ref{Line:Converge} in Fig.~\ref{Fig:AlgTop}.

From the above discussion, the communication overhead of the proposed two-layer game mainly results from the action signaling (SWITCH/HOLD/SLEEP) in Fig.~\ref{Fig:AlgTop}, the FA rate information exchange discussed above, and sharing of sensing results. The latter two types of overhead are commonly assumed in the cooperative sensing literature, e.g. \cite{HedonicSenseGame,AuctionBased,ZHanCoalSenseGame,FairPayoff,MultiChCoalSenseGame,Overlapp,RLiuEvoGame}, while the former costs no more than two bits per slot. Moreover, we must emphasize that the exchange of local data rate information $\{R^{mn}:m\in S\}$ is not required.

We note that the two-layer game structure limits the computational complexity, the negotiation delay, and the communication overhead, since an SU that explores switching from channel $n$ to $\tilde{n}$ needs to learn only the FA rates of the SUs currently residing on channel $\tilde{n}$. In contrast, in a one-layer game where SUs can form arbitrary coalitions spanning multiple channels, an SU looking for a switching opportunity would have to negotiate with and obtain data on a much larger set of SUs.

Finally, while we have ignored the overhead of communication and MAC when computing the coalition values in (\ref{Eq:CoalValue0X}) and (\ref{Eq:CoalValue1X}), such overhead is inevitable in practice. Very large overhead might greatly reduce the value of the grand coalition and, thus, its formation on individual channels. Payoff allocation has not received significant attention in this scenario since most results on bargaining games focus on allocating the value of the grand coalition. Moreover, delay and uncertainty may arise in reaching a joint agreement on both the coalition structure and the payoff allocation in the bottom-layer game. Development of novel game-theoretic solutions is required to accommodate such extensions.

\section{Simulation Results}\label{Sec:Sim}
We assume the following simulation setup throughout this section unless otherwise noted: The sensing and slot durations are $5$\,ms and $100$\,ms, respectively. The SU sensing power $P_S$, SU transmission power $P_\mathrm{SU}$, PU transmission power $P_\mathrm{PU}$, and noise power $P_N$ are $10$\,mW, $10$\,mW, $100$\,mW and $0.1$\,mW, respectively. All users are randomly placed in a square region of $100$\,m${\times}100$\,m, and only path loss effects are considered with the pass loss exponent equal to $2$\cite{HedonicSenseGame}. The PU-to-SU SNR is given by $\lambda^{mn}=P_\mathrm{PU}d_{mn}^{-2}/P_N$, where $d_{mn}$ is the distance between PU $n$ and SU $m$. Similarly, the SU-to-SU SNR between two SUs $m$ and $m'$ is given by $\gamma^{mm'}=P_\mathrm{SU}d_{mm'}^{-2}/P_N$, where $d_{mm'}$ is the SU distance. We compare only with \cite{HedonicSenseGame} because, to the best of our knowledge, other related references did not consider joint spectrum sensing and access in a multichannel CR network.

\begin{figure}[!t]
    \centering
    \includegraphics[scale=\FigScale]{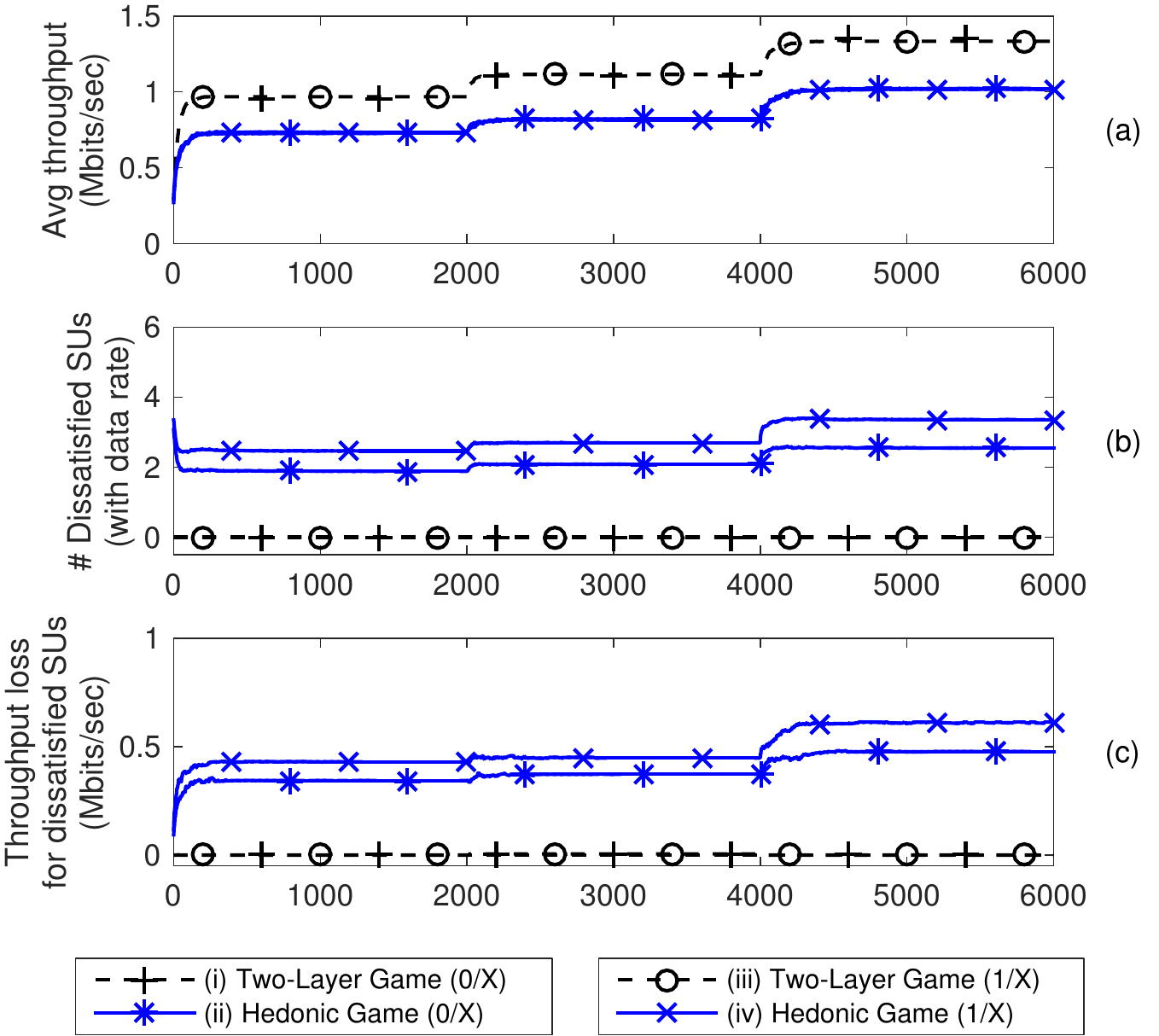}
		\includegraphics[scale=\FigScale]{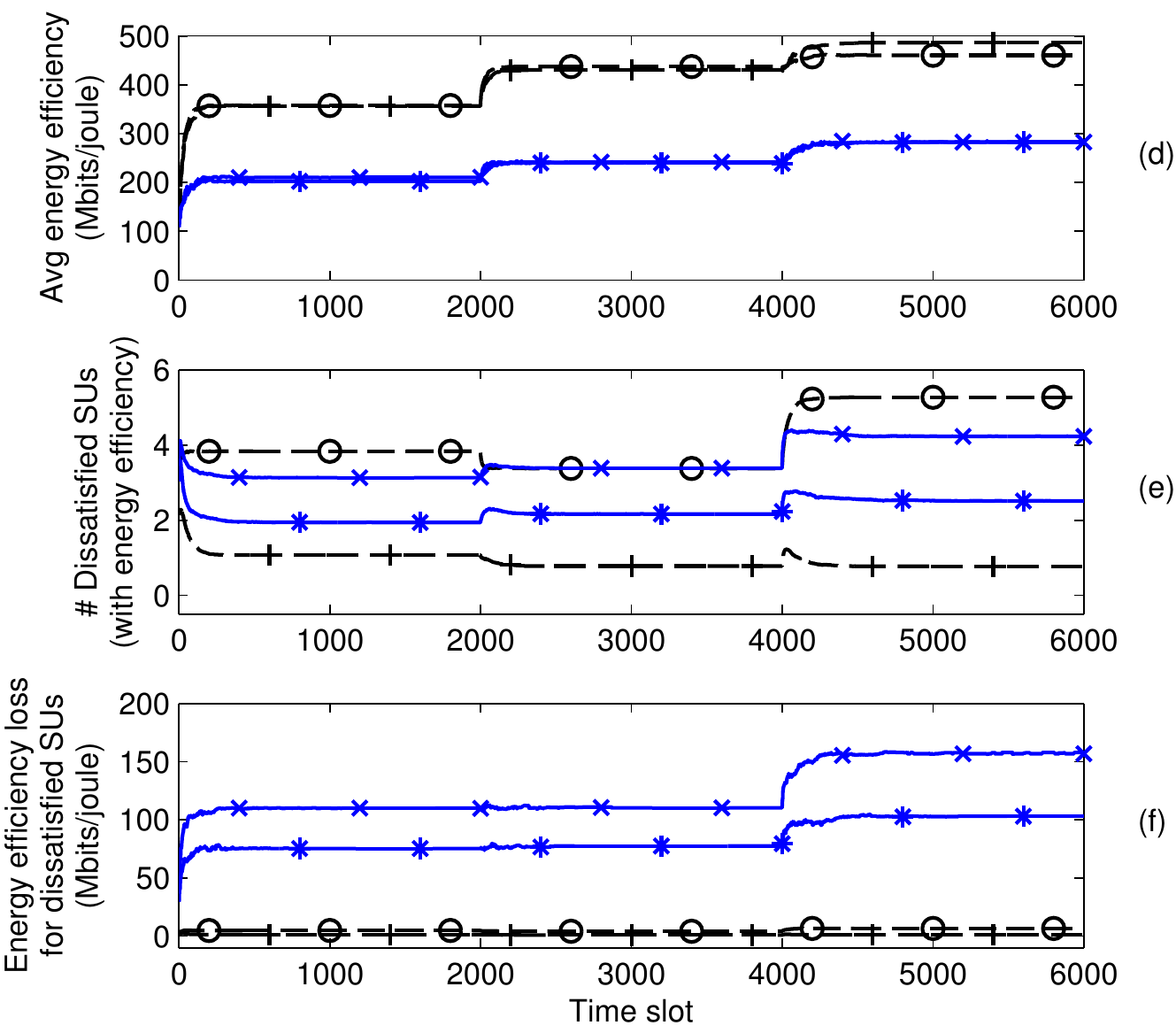}
    \caption{Performance comparison of the proposed two-layer game and the hedonic game in \cite{HedonicSenseGame}.}
    \label{Fig:EnergyEfficiency}
\end{figure}

In Fig.~\ref{Fig:EnergyEfficiency}, we compare the proposed game (i,iii) with the game in \cite{HedonicSenseGame} (ii,iv) for the two MAC models [cf. (\ref{Eq:CoalValue0X},\ref{Eq:CoalValue1X})]. Initially there are $M=10$ SUs and $N=5$ PUs. At time slots $2000$ and $4000$, these parameters change to ($M=10,N=6$) and ($M=14,N=6$), respectively. Each PU uses one channel with $B^n=10$\,MHz exclusively. The channel availability probability is $\beta^n=0.2$. For fair comparison, the AND-rule combining scheme under the constraints described in Section~\ref{subsec:CoopSenseDef} is used for both the proposed two-layer game and the game in \cite{HedonicSenseGame} assuming $\nu=5$ and $P_\mathrm{MD}^\mathrm{Ch}=0.01$. Note that \cite{HedonicSenseGame} aims to maximize the energy efficiency while the objective of the proposed game is to maximize the expected data rate (\ref{Eq:PTU}). Nevertheless, the proposed game achieves better average energy efficiency and throughput in Fig.~\ref{Fig:EnergyEfficiency}(a,d), since
in this game SUs are encouraged to join channels with more detectable PU traffic to obtain higher payoffs, leading to efficient sensing task distribution among the selfish SUs. As a result, more transmission opportunities can be correctly detected in the CR network while maintaining the same energy cost for spectrum sensing.

Moreover, each SU is allocated the access opportunities it deserves and, thus, is provided with sufficient incentives to participate in the proposed game. In particular, as shown in Fig.~\ref{Fig:EnergyEfficiency}(b,c), every SU has higher data rate when playing the proposed game than operating alone (i.e., all SUs are satisfied with their data rates). Despite the fact that the proposed game can have a greater number of dissatisfied SUs in terms of energy efficiency than the game in \cite{HedonicSenseGame} under the 1/X-model [curves iii vs. iv in Fig.~\ref{Fig:EnergyEfficiency}(e)], these SUs experience negligible energy efficiency loss [Fig.~\ref{Fig:EnergyEfficiency}(f)], and the overall energy efficiency is significantly higher for the two-layer game [Fig.~\ref{Fig:EnergyEfficiency}(d)].

\begin{figure}[!t]
    \centering
    \includegraphics[scale=\FigScale]{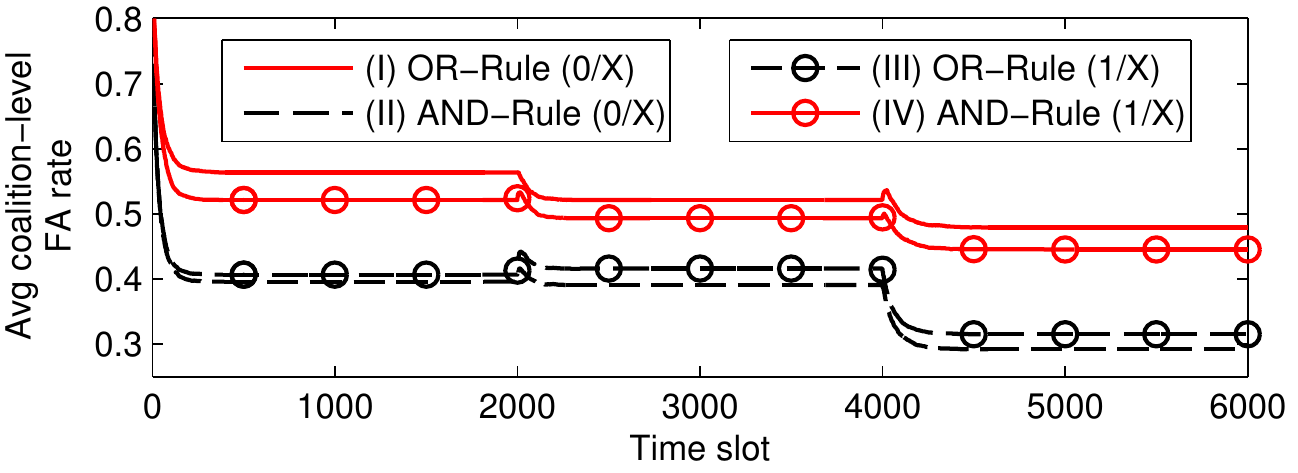}
    \caption{ Average coalition-level FA rate of the proposed hedonic game.}
    \label{Fig:CoalLevelFA}
\end{figure}

We also note that the changes in the numbers of PUs/SUs at time slots $2000$ and $4000$ do not cause significant disruptions to these games. Finally, while the overall system throughputs and energy consumption in Fig.~\ref{Fig:EnergyEfficiency}(a,d) are very similar under both MAC models, individual SUs report a higher degree of satisfaction under the 0/X-model as shown in Fig.~\ref{Fig:EnergyEfficiency}(b,c,e,f) since this model has a higher noncooperation cost (Section~\ref{SubSec:BottomDef}).

\ifdoublecolumn
\begin{figure*}
\normalsize
\begin{align}
U_\mathrm{1/X}^n(\eta ;\rho^n ) &=\beta ^n(1 - P_\mathrm{FA}^n(\eta ))\cdot\smashoperator[l]{\sum\limits_{\mathbf{\tilde{x}}  \in \{0,1\}^{|\rho^n | - 3}}}\Bigg\{{\sum_{(x_1,x_2)\in\{0,1\}^2}}\frac{|\eta|\Pr(X_{\xi_1}=x_1)\Pr(X_{\xi_2}=x_2)\prod _{i=1}^{|\rho^n|-3}\Pr(X_{\xi_{i+2}}=\tilde{x}_i)}{|\eta|+J_{\{\xi_1,\xi_2\}}((x_1,x_2))+J_{\rho^n\backslash\{\eta\}\backslash\{\xi_1\}\backslash\{\xi_2\}}(\mathbf{\tilde{x}})}\Bigg\}\label{Eq:NegExternalityBefore}\\
{U_\mathrm{1/X}^n(\eta ;\tilde \rho^n )} &=\beta ^n(1 - P_\mathrm{FA}^n(\eta ))\cdot\smashoperator[l]{\sum\limits_{\mathbf{\tilde{x}}  \in \{0,1\}^{|\rho^n | - 3}}}\Bigg\{{\sum_{x_1\in\{0,1\}}}\frac{|\eta|\Pr(X_{\xi_1\cup\xi_2}=x_1)\prod _{i=1}^{|\rho^n|-3}\Pr(X_{\xi_{i+2}}=\tilde{x}_i)}{|\eta|+J_{\{\xi_1\cup\xi_2\}}((x_1))+J_{\tilde\rho^n\backslash\{\eta\}\backslash\{\xi_1\cup\xi_2\} }(\mathbf{\tilde x})}\Bigg\}\label{Eq:NegExternalityAfter}\\
 U_\mathrm{1/X}^n(\eta ;\rho^n )&-U_\mathrm{1/X}^n(\eta ; \tilde\rho^n )=\beta ^n(1 - P_{{\mathrm{FA}}}^n(\eta ))\cdot\smashoperator[l]{\sum\limits_{\mathbf{\tilde{x}}  \in \{0,1\}^{|\rho^n | - 3}}} {\Bigg\{ \prod _{i=1}^{|\rho^n|-3}\Pr(X_{\xi_{i+2}}=\tilde{x}_i)\cdot A(\mathbf{\tilde{x}})  \Bigg\}} \label{Eq:NegExternalityProof}
\end{align}
\hrulefill
\vspace*{-10.2pt}
\label{Fig:Eq:NegExternalityBeforeAfter}
\end{figure*}
\fi

In Fig.~\ref{Fig:CoalLevelFA} we validate the superiority of the AND-rule over the OR-rule for heterogeneous environments (see Section~\ref{subsec:CoopSenseDef}) by comparing the average FA rates of the coalitions formed in the proposed game for the realistic scenario of Fig.~\ref{Fig:EnergyEfficiency}. Not surprisingly, we observe significant sensing accuracy degradation using the OR-rule, which can be explained by diverse PU-to-SU SNRs among the coalition members.

Next, we compare the computational complexity (\ref{Eq:Complexity}) and convergence time in Fig.~\ref{Fig:Complexity} for varying $N$ values, assuming stable SU/PU population and the 0/X MAC model. Note that when an SU determines whether it should switch to another channel, it needs the knowledge of the updated individual FA rates on the current and new channel for all involved SUs in both the proposed game and the game in \cite{HedonicSenseGame}. Thus, the complexity analysis in Section~\ref{SubSec:TopSol} applies to both games. We notice a significant reduction in convergence time and computational complexity when the proposed game is played, due primarily to provision for the social utility improvement in the preference relation (\ref{Eq:Preference})\footnote{Our game is still superior to \cite{HedonicSenseGame} in terms of the data rate, energy efficiency, and fairness when the social improvement requirement is removed.}. Moreover, extensive numerical experiments show that the actual number of switches needed for convergence is much smaller than the crude upper bound in Proposition~\ref{Prop:Converge}.

\begin{figure}[!t]
    \centering
    \includegraphics[scale=\FigScale]{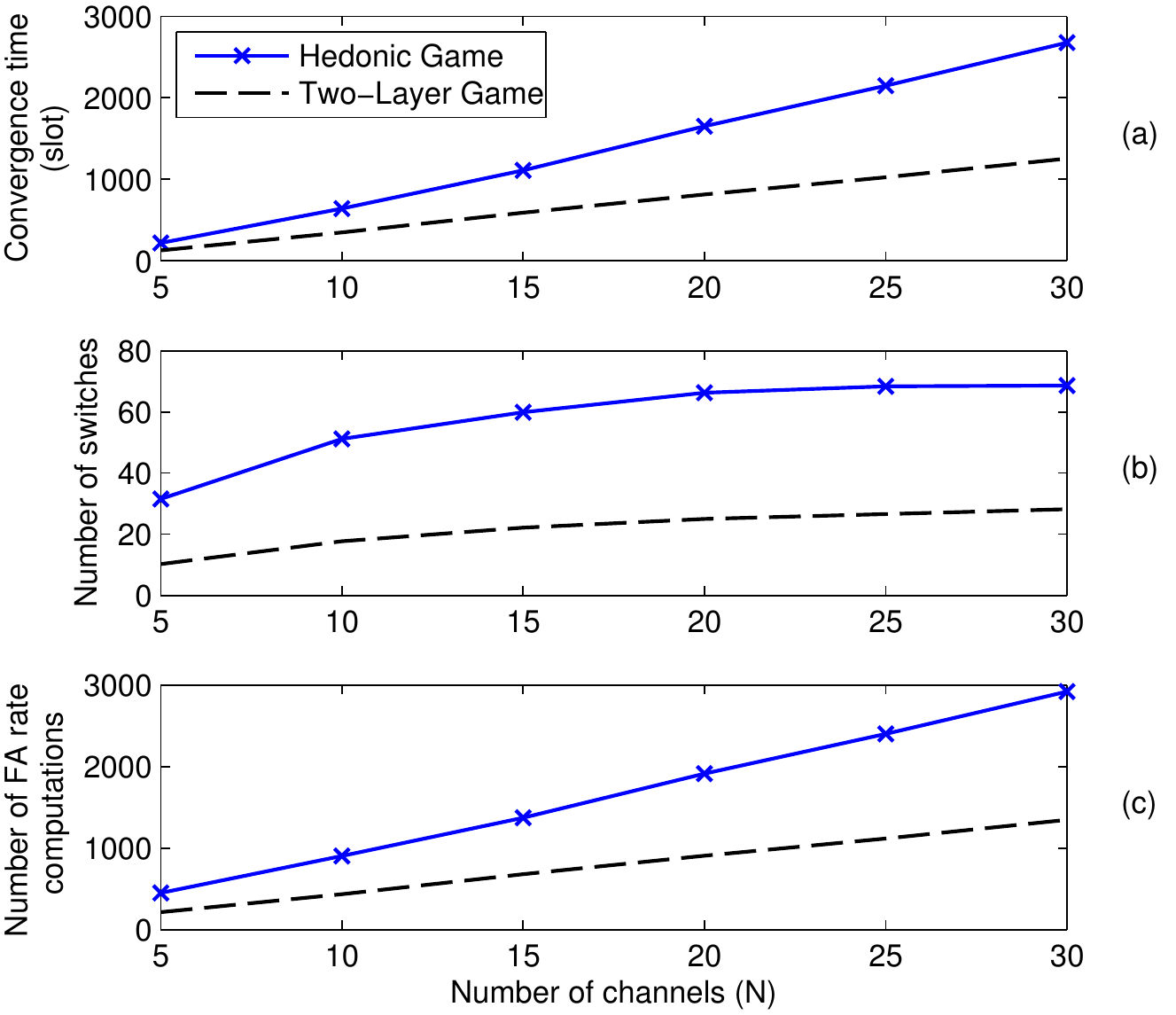}
    \caption{Computational complexity vs. $N$; $M=10$ SUs; 0/X-model: (a) convergence time $T_\textrm{Converge}$; (b) number of switches $|\mathcal{T}_\textrm{Switch}|$; (c) number of FA rate computations (\ref{Eq:Complexity}).}
    \label{Fig:Complexity}
\end{figure}

\begin{figure}[!t]
    \centering
    \includegraphics[scale=\FigScale]{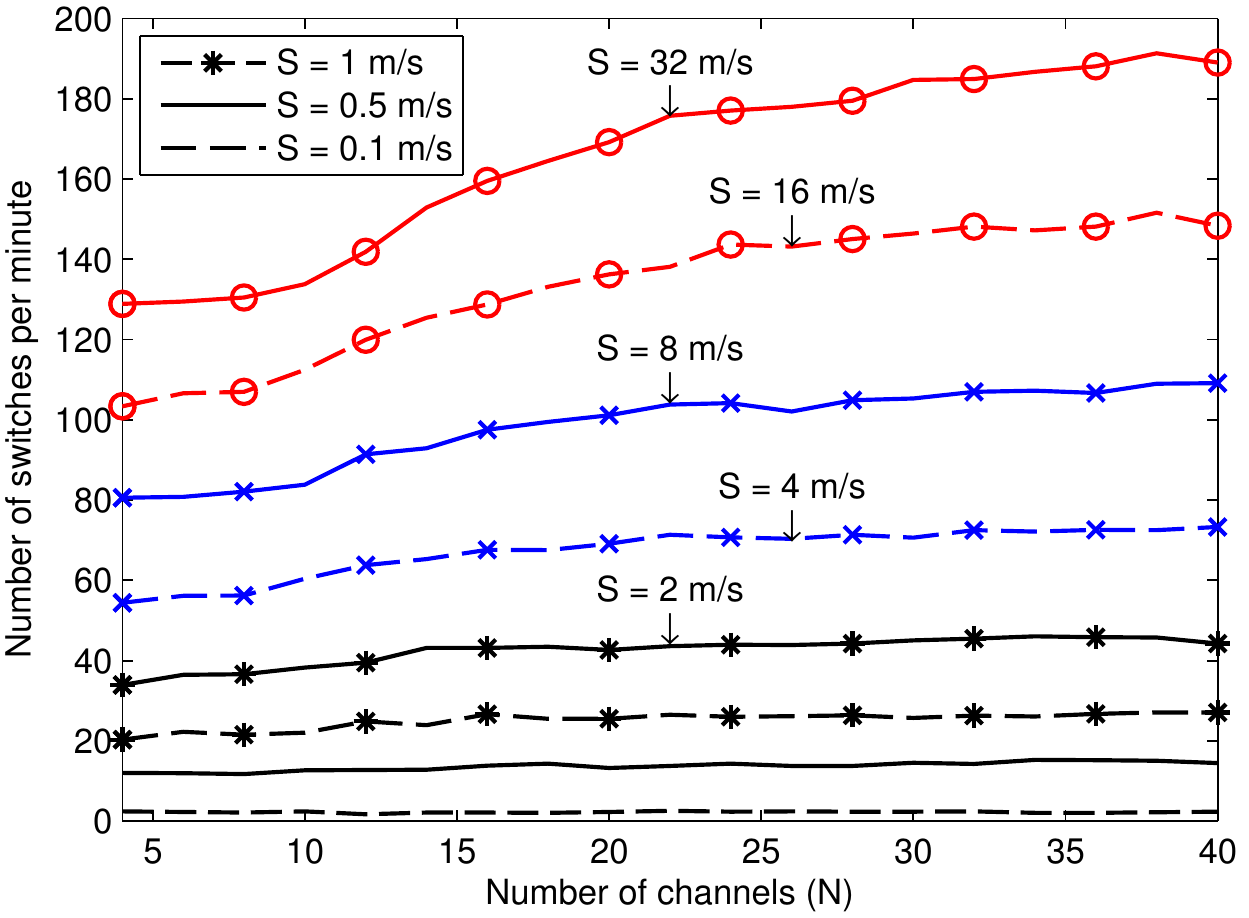}
    \caption{Switch frequency vs. number of channels ($N$) for various traveling speeds ($V$); 0/X-model; $M=10$ SUs.}
    \label{fig:MobilityNumOfSwitch}
\end{figure}

Finally, we study the effect of mobility on the performance of the proposed game. We assume that the nodes initially achieve a Nash-stable SU network partition and then move in randomly chosen independent directions at a constant speed $V$ following the random direction model \cite{MobilityModel}. When a node reaches the simulation region boundary, it moves in a new randomly chosen direction in the next time slot. Fig.~\ref{fig:MobilityNumOfSwitch} illustrates the \textit{switch frequency} given by the number of channel switches per minute in the network \cite{ZHanCoalSenseGame}. We observe that doubling of the mobile speed $V$ results in 10 to 40 additional channel switches per minute. However, for moderate speeds, the switch frequency is insensitive to the numbers of channels in the network. Frequent channel switching increases $|\mathcal{T}_\textrm{Switch}|$ in (\ref{Eq:Complexity}) and, thus, results in much larger  computational complexity. Moreover, the overhead grows with the mobile speed since the need for message exchange increases. While the proposed game adjusts well to sudden changes in the network settings (see Fig.~\ref{Fig:EnergyEfficiency}) and node positions \cite{YLuThesis}, continuous mobility, as in Fig.~\ref{fig:MobilityNumOfSwitch}, prevents convergence and increases complexity and overhead. Thus, alternative methods are required to resolve SU competition and provide high throughput in congested mobile CR networks. For example, in \cite{YLuTVT,YLuAllerton} we have investigated opportunistic sensing strategies that rely on multichannel diversity by adapting the reward of each SU pair to predicted local channel state information prior to sensing.

\section{Conclusion}\label{Sec:Conclusion}
The proposed two-layer game provides a comprehensive coalitional game-theoretical framework for cooperative sensing and access in multichannel multi-SU CR ad-hoc networks. Each SU is provided with the transmission opportunities it deserves and, thus, with sufficient incentives to participate in the proposed game.	Moreover, the contribution-based payoff allocation rule accelerates efficient distributed allocation of the SUs to the channels within the given CR spectrum. We also present a new cooperative sensing approach under MD constraints for guaranteed PU protection in the heterogeneous environment. Finally, we demonstrate that the proposed two-layer game outperforms the approaches in the literature in terms of energy efficiency, throughput, fairness, and is attractive in practical ad-hoc networks due to its relatively low computation load, distributed algorithm implementation, as well as robustness to changes in the number and positions of the SUs.

\ifdoublecolumn
\else
\begin{figure*}
\normalsize
\begin{align}
U_\mathrm{1/X}^n(\eta ;\rho^n ) &=\beta ^n(1 - P_\mathrm{FA}^n(\eta ))\cdot\smashoperator[l]{\sum\limits_{\mathbf{\tilde{x}}  \in \{0,1\}^{|\rho^n | - 3}}}\Bigg\{{\sum_{(x_1,x_2)\in\{0,1\}^2}}\frac{|\eta|\Pr(X_{\xi_1}=x_1)\Pr(X_{\xi_2}=x_2)\prod _{i=1}^{|\rho^n|-3}\Pr(X_{\xi_{i+2}}=\tilde{x}_i)}{|\eta|+J_{\{\xi_1,\xi_2\}}((x_1,x_2))+J_{\rho^n\backslash\{\eta\}\backslash\{\xi_1\}\backslash\{\xi_2\}}(\mathbf{\tilde{x}})}\Bigg\}\label{Eq:NegExternalityBefore}\\
{U_\mathrm{1/X}^n(\eta ;\tilde \rho^n )} &=\beta ^n(1 - P_\mathrm{FA}^n(\eta ))\cdot\smashoperator[l]{\sum\limits_{\mathbf{\tilde{x}}  \in \{0,1\}^{|\rho^n | - 3}}}\Bigg\{{\sum_{x_1\in\{0,1\}}}\frac{|\eta|\Pr(X_{\xi_1\cup\xi_2}=x_1)\prod _{i=1}^{|\rho^n|-3}\Pr(X_{\xi_{i+2}}=\tilde{x}_i)}{|\eta|+J_{\{\xi_1\cup\xi_2\}}((x_1))+J_{\tilde\rho^n\backslash\{\eta\}\backslash\{\xi_1\cup\xi_2\} }(\mathbf{\tilde x})}\Bigg\}\label{Eq:NegExternalityAfter}\\
 U_\mathrm{1/X}^n(\eta ;\rho^n )&-U_\mathrm{1/X}^n(\eta ;\tilde\rho^n)=\beta ^n(1 - P_{{\mathrm{FA}}}^n(\eta ))\cdot\smashoperator[l]{\sum\limits_{\mathbf{\tilde{x}}  \in \{0,1\}^{|\rho^n | - 3}}} {\Bigg\{ \prod _{i=1}^{|\rho^n|-3}\Pr(X_{\xi_{i+2}}=\tilde{x}_i)\cdot A(\mathbf{\tilde{x}})  \Bigg\}} \label{Eq:NegExternalityProof}
\end{align}
\hrulefill
\vspace*{-10.2pt}
\label{Fig:Eq:NegExternalityBeforeAfter}
\end{figure*}
\fi

\appendices 
\section{Proof of Proposition~\ref{Prop:NegExternality}}\label{Sec:NegExternalityProof}
Without loss of generality, we assume $\xi=\xi _1$ and $\zeta=\xi _2$ are the bottom-layer coalitions to be merged. The resulting bottom-layer partition is given by $\tilde\rho^n = \{ \xi _1  \cup \xi _2 \}  \cup (\rho^n \backslash \{ \xi _1 \} \backslash \{ \xi _2 \} )$ with $|\tilde\rho^n | = |\rho^n | - 1$. Expanding the bottom-layer coalition value ${U_\mathrm{1/X}^n(\eta ;\rho^n )}$ in (\ref{Eq:CoalValue1X}) over all possible values of $(X_{\xi_1},X_{\xi_2})$ in $\{0,1\}^2$ yields (\ref{Eq:NegExternalityBefore}) at the top of page \pageref{Eq:NegExternalityBefore} [cf. (\ref{Eq:X})]. Similarly, the bottom-layer coalition value ${U_\mathrm{1/X}^n(\eta ;\tilde\rho^n )}$ after the merger is derived in (\ref{Eq:NegExternalityAfter}) by expanding over all possible $X_{\xi_1\cup\xi_2}$ values. Noting that $\rho^n\backslash\{\eta\}\backslash\{\xi_1\}\backslash\{\xi_2\}=\tilde\rho^n\backslash\{\eta\}\backslash\{\xi_1\cup\xi_2\}$ and $P_{{\mathrm{FA}}}^n({\xi _1}\cup{\xi _1})=P_{{\mathrm{FA}}}^n({\xi _1})P_{{\mathrm{FA}}}^n({\xi _2})$ [cf. (\ref{Eq:CoalFAResultAND})], we subtract (\ref{Eq:NegExternalityAfter}) from (\ref{Eq:NegExternalityBefore}) and simplify to obtain (\ref{Eq:NegExternalityProof}) where [cf. (\ref{Eq:ParitionExcept})--(\ref{Eq:J})]
\begin{align}\label{Eq:A}
A(\mathbf{\tilde x})& =\Big(\frac{1}{b_1(\mathbf{\tilde x})}-\frac{1}{b_{12}(\mathbf{\tilde x})} \Big)|\eta|P_{{\mathrm{FA}}}^n({\xi _2})\Big(1 -P_{{\mathrm{FA}}}^n({\xi _1})\Big)\\
{}&+ \Big(\frac{1}{b_2(\mathbf{\tilde x})}- \frac{1}{b_{12}(\mathbf{\tilde x})} \Big)|\eta|P_{{\mathrm{FA}}}^n({\xi _1})\Big(1 - P_{{\mathrm{FA}}}^n({\xi _2})\Big)\notag
\end{align}
with
\begin{equation}
\begin{split}
{b_1}(\mathbf{\tilde x}) &= |\eta | + |{\xi _1}| + \sum\nolimits_{i = 1}^{|\rho^n | - 3} {(1 - {\tilde x}_i)|{\xi _{i {+} 2}}|} \\
{b_2}(\mathbf{\tilde x}) &= |\eta | + |{\xi _2}| + \sum\nolimits_{i = 1}^{|\rho^n | - 3} {(1 - {\tilde x}_i)|{\xi _{i {+} 2}}|} \\
{b_{12}}(\mathbf{\tilde x}) &= |\eta | + |{\xi _1}| + |{\xi _2}| + \sum\nolimits_{i = 1}^{|\rho^n | - 3} {(1- {\tilde x}_i)|{\xi _{i {+} 2}}|}.
\end{split}
\end{equation}
It is easy to show that $A(\mathbf{\tilde x}$ ) in (\ref{Eq:A}) and, thus, the expression in (\ref{Eq:NegExternalityProof}) must be nonnegative.

\bibliographystyle{IEEEtran}
\bibliography{TwoLayerGameBib}

\end{document}